\newtheorem{theorem}{Theorem}[section]
\newtheorem{lemma}[theorem]{Lemma}
\newtheorem{definition}[theorem]{Definition}
\newtheorem{assumption}[theorem]{Assumption}
\newtheorem{claim}[theorem]{Claim}
\newcommand{\bqed}{$\blacksquare$}
\newcommand{\setI}{{\mathcal I}}
\newcommand{\setC}{{\mathcal C}}
\title{
{\bf On the Recognition of Fuzzy Circular Interval Graphs}}
\author{Gianpaolo Oriolo\footnote{Universit\`a di Roma ``Tor Vergata'', Dipartimento di Ingegneria dell'Impresa, Viale del Politecnico 1, 00133 Roma, Italy. E-mail: \texttt{oriolo@disp.uniroma2.it}} \and
Ugo Pietropaoli\footnote{Universit\`a di Roma ``Tor Vergata'', Dipartimento di Ingegneria dell'Impresa, Viale del Politecnico 1, 00133 Roma, Italy. E-mail: \texttt{pietropaoli@disp.uniroma2.it}} \and 
Gautier Stauffer\footnote{Institut de Math\`ematiques de Bordeaux, Universite de Bordeaux 1. Email:\texttt{gstauffer@math.u-bordeaux1.fr}}}
\date{}
\begin{document}

\maketitle

\begin{abstract}

\noindent 

{\em Fuzzy circular interval graphs} are a generalization of proper circular arc graphs and have been recently introduced by Chudnovsky and Seymour as a fundamental subclass of claw-free graphs. In this paper, 
we provide a polynomial-time algorithm for recognizing such graphs, and more importantly for building a suitable representation. 
\end{abstract}

{\sc Keywords}: claw-free graphs, circular interval graphs, homogenous cliques.

\section{Introduction}
\label{intro}

A graph is \emph{claw-free} if no vertex has three pairwise non-adjacent neighbors. Claw-free graphs have been receiving much of attention in the last years, especially after Chudnovsky and Seymour (see e.g. \cite{CS_Survey} and \cite{CS_3}) prove several structural results for those graphs.  They show in particular that a claw-free graph is either a \emph{fuzzy circular interval graph}~\cite{CS_Survey} ({\sc fcig} for short, see Def. \ref{def:fuzzy}) or the ``composition" of some base-graphs; moreover, as they point out \cite{CS_3}, ``{\em [fuzzy circular interval graphs] are claw-free, and these together with line graphs turn out to be the two ``principal'' basic classes of claw-free graphs}.''  In fact, {\sc fcig}s also play a crucial role in a linear description of the stable set polytope of {\em quasi-line graphs}, a relevant sub-class of claw-free graphs~\cite{EOSV}. ({\sc fcig}s are also called graphs that are {\em thickening of circular interval trigraphs}, see~\cite{CS_3}.) 

\smallskip
In this paper we shed some light onto the class of {\sc fcig}s. We describe an algorithm for recognizing {\sc fcig}s, and building a suitable representation. While a recognition algorithm could be possibly derived from a characterization of {\sc fcig}s in terms of excluded subgraphs \cite{C_private}, no algorithm for constructing a representation was available before. 

Our idea builds upon the fact that a {\sc fcig} without \emph{proper and homogeneous} pairs of cliques is indeed a \emph{circular interval graph} (definitions come later) and that circular interval graphs (which are also called \emph{proper circular arc graphs}) admit poly-time algorithms for solving the recognition problem \cite{C, DHH, MC}. We therefore introduce an operation of \emph{reduction} of proper and homogeneous pairs of cliques, which preserves the fuzzy circular interval structure. In particular, by applying this operation a polynomial number of times to a graph $G$, we end up with a graph $G'$ without proper and homogeneous pairs of cliques. Moreover $G'$ is circular interval if and only if $G$ was a {\sc fcig}. 
All together, we derive a polynomial-time algorithm to recognize whether a graph is a {\sc fcig}, and, in case, build a suitable representation. In fact, building upon a few facts from the literature, this algorithm can be implemented as to run in $O(n^2m)$-time.

\smallskip
The paper is organized as follows.  In Section \ref{fcig}, we introduce the classes of circular interval graphs, fuzzy circular interval graphs and recall the definitions of proper and homogeneous pairs of cliques. Then in Section \ref{sec:prop_hom} we define almost proper pairs of cliques and show some properties of pairs of cliques that are almost proper and homogeneous in a {\sc fcig}. In Section \ref{sec:characterization} we define a reduction operation for homogeneous pairs of cliques, and we prove that this reduction preserves the property of a graph to be a {\sc fcig} when the pair of cliques is proper. Finally, in Section \ref{sec:recognition}, we show how to find and reduce pairs of cliques that are proper and homogeneous, and we provide the recognition and representation algorithm for fuzzy circular interval graphs.

\smallskip
We close the introduction with a definition. A graph $G=(V,E)$ will always be simple and undirected. We denote by $n$ the number of vertices and by $m$ the number of edges. For a set $X\subseteq V$, we denote by $G[X]$ the subgraph induced by $X$. For a vertex $v$, we denote by $N(v)$ the {\em neighborhood} of $v$, i.e.~the set of vertices that are adjacent to $v$.

\begin{definition}
\label{complete}
Let $Q$ be a clique of $G=(V,E)$ and let $v \in V \setminus Q$:
\begin{itemize}
\item $v$ is {\em complete} to $Q$ if $Q\subseteq N(v)$, and $\Gamma(Q)$ is the set of vertices that are complete to $Q$.
\item $v$ is {\em anti-complete} to $Q$ if $Q\cap N(v) = \emptyset$,  and $\overline\Gamma(Q)$ is the set of vertices that are anti-complete to $Q$.
\item $v$ is {\em proper} to $Q$ if $v$ is neither complete nor anti-complete to $Q$, and $P(Q)$ is the set of vertices that are proper to $Q$.
\end{itemize}
\end{definition}

\section{Fuzzy circular interval graphs}
\label{fcig}

Fuzzy circular interval graphs, that are also called graphs that are {\em thickening of circular interval trigraphs} (see e.g.~\cite{CS_3}), have been introduced by Chudnovsky and Seymour as a generalization of the simpler class of circular interval graphs.

\begin{definition} {\em \cite{CS_Survey}}
A {\em circular interval graph} $G=(V,E)$ is  defined by
the following construction: Take a circle ${\mathcal C}$ and a set of vertices
$V$ on the circle. Take a subset of intervals ${\mathcal I}$ of ${\mathcal C}$
and say that $u,v\in V$ are adjacent if $\{u,v\}$ is a subset of one of
the intervals.
\end{definition}

Circular interval graphs (see Figure \ref{fig:cig_fcig}) are also called {\em proper circular arc graphs}, i.e.~they are equivalent to the intersection graphs of arcs of a circle with no containment between arcs~\cite{CS_Survey}. Therefore, we may associate with a circular interval graph both an {\em interval representation} and an {\em arc representation}.

Given a graph $G$ with $n$ vertices and $m$ edges, there are many polynomial time algorithms that recognize whether $G$ is a proper circular arc graph (and therefore a circular interval one) and, in case, build the arc representation (see e.g. \cite{C, DHH, MC}). In this paper, we mainly refer to the linear (i.e.  $\mathcal{O}(n+m)$) time algorithm in \cite{DHH}, since it can be trivially adapted to build, still in linear time, the {\em interval} representation for $G$, if any (see Proposition 2.6 in \cite{DHH}); note also that this representation uses $n$ intervals. 

\begin{definition}\label{def:fuzzy} {\em \cite{CS_Survey}}
A graph $G=(V,E)$ is  {\em fuzzy
circular interval} ({\sc fcig}) if the following conditions hold. 
\begin{itemize}
\item[(i)] There is a map $\Phi$ from $V$ to a circle ${\mathcal C}$.
\item[(ii)] There is a set ${\mathcal I}$ of intervals of ${\mathcal C}$, each homeomorphic to the closed interval [0, 1] and none
including another,  such that no point of ${\mathcal C}$ is the end of more
than one interval and:
\begin{itemize}

\item[(a)] If two vertices $u$ and $v$ are adjacent, then $\Phi(u)$ and
  $\Phi(v)$ belong to a common interval.
\item[(b)]  If two vertices $u$ and $v$ belong to the same interval, which is not an interval with endpoints $\Phi(u)$ and $\Phi(v)$, then they are adjacent.
\end{itemize}
\end{itemize}
\end{definition}

In other words, in a {\sc fcig}, adjacencies are
completely described by the pair $(\Phi, {\mathcal I})$, except  for vertices $u$
and $v$ such that one of the intervals with endpoints $\Phi(u)$ and $\Phi(v)$ belongs to ${\mathcal I}$. For these vertices adjacency is fuzzy (see Figure \ref{fig:cig_fcig}) i.e.~the adjacencies can be arbitrarily chosen. In the following, when referring to a {\sc fcig}, we often consider a \emph{representation} $(\Phi, {\mathcal I})$ and detail the fuzzy adjacencies only when needed. Sometimes, we abuse notation and let $G~=~(V,\Phi,{\mathcal I})$ be a {\sc fcig} with vertex set $V$ and representation $(\Phi, {\mathcal I})$ and again detail the fuzzy adjacencies only when needed. Note that, by definition, each interval of a representation $(\Phi, {\mathcal I})$ of a {\sc fcig} has non-empty interior. It is also easy to see that, if we are given for some {\sc fcig} $G$ a representation $(\Phi,{\mathcal I})$ such that $|{\mathcal I}|> n$, then there is some interval $I\in \setI$ such that $(\Phi,{\mathcal I}\setminus I)$ is still a representation for $G$. Also, as we discussed above, with a trivial modification, the algorithm in \cite{DHH} returns a representation for a {\sc cig} with $n$ intervals. Since our main result, an algorithm for recognizing and building a representation for {\sc fcig}s, builds upon this latter algorithm, in this paper we assume the following:

\begin{assumption}\label{number_of_int}
When we deal with a {\sc fcig} $G$ for which a representation $(\Phi,{\mathcal I})$ is given, we always assume that $|{\mathcal I}|\leq n$.
\end{assumption}

\smallskip
Given a circle ${\mathcal C}$, let $a$ and $b$ be two points of ${\mathcal C}$. We denote by $[a,b]$ the interval of ${\mathcal C}$ that we span if we move clockwise from $a$ to $b$. Similarly $(a,b)$ denotes $[a,b]\setminus \{a,b\}$. Given a point $p$ of $\mathcal{C}$, we denote by $\Phi^{-1}(p)$ the set $\{ v \in V \mid \Phi(v)=p\}$ (note that $\Phi^{-1}(p)$ is a clique if the graph is connected), by $\Phi^{-1}([a,b])$ the set $\{v \in V:  \Phi(v) \in [a,b] \}$, by $\Phi^{-1}((a,b))$ the set $\{v \in V:   \Phi(v) \in (a,b) \}$, and so on. Sometimes, we abuse notation and we say $a\leq \Phi(v) \leq b$ for $\Phi(v) \in [a,b]$ and similarly $a < \Phi(v) < b$ for $\Phi(v) \in (a,b)$. 
If $[p, q]$ is an interval of ${\mathcal I}$ such that $\Phi^{-1}(p)$ and $\Phi^{-1}(q)$ are both
non-empty, then we call $[p, q]$ a {\it fuzzy interval} and the cliques ($\Phi^{-1}(p), \Phi^{-1}(q))$ a {\it fuzzy pair}. 

\smallskip
Substituting {\it line} for {\it circle} in the two previous definitions allows to define {\em linear interval graphs} and {\em fuzzy linear interval graphs}. 
Linear interval graphs are also called {\em proper} (or {\em unit}) {\em interval graphs} and several algorithms are available for solving the recognition problem \cite{Cetal, HMM, PD}.

\begin{figure}[h]
\begin{center}
\vspace{1.2cm}
\begin{tabular}{ccc}
\psset{unit=.7cm}
\begin{pspicture}(4,4)
\pscircle(2,2){2}

\cnode*(4,2){3pt}{1}
\cnode*(2,0){3pt}{2}
\cnode*(.45,.8){3pt}{3}
\cnode*(3,3.65){3pt}{4}
\cnode*(.9,3.6){3pt}{5}
\cnode*(3.2,4){3pt}{6}
\cnode*(.65,3.95){3pt}{7}
\cnode*(.4,4.25){3pt}{8}
\ncline{-}{4}{5}
\ncline{-}{4}{6}
\ncline{-}{4}{7}
\ncline{-}{4}{8}
\ncline{-}{5}{6}
\ncline{-}{5}{7}
\ncline{-}{6}{7}
\ncline{-}{6}{8}
\ncline{-}{7}{8}
\ncarc[arcangleA=90, arcangleB=90]{-}{5}{8}

\ncline{-}{1}{4}
\ncline{-}{1}{6}

\ncline{-}{1}{2}
\ncline{-}{2}{3}
\ncline{-}{1}{3}
\psarc[linecolor=blue](2,2){2.5}{210}{10}
\psarc[linecolor=blue](2,2){3.5}{59}{125}
\psarc[linecolor=blue](2,2){3}{-10}{70}
\end{pspicture} &&

\hspace{2cm}
\psset{unit=.7cm}
\begin{pspicture}(4,4)

\pscircle(2,2){2}

\cnode*(4,2){3pt}{1}
\cnode*(2,0){3pt}{2}
\cnode*(.45,.8){3pt}{3}
\cnode*(3,3.65){3pt}{4}
\cnode*(.9,3.6){3pt}{5}
\cnode*(3.2,4){3pt}{6}
\cnode*(.65,3.95){3pt}{7}
\cnode*(.4,4.25){3pt}{8}

\ncline[linecolor=red, linestyle=dashed]{-}{4}{5}
\ncline{-}{4}{6}
\ncline[linecolor=red, linestyle=dashed]{-}{4}{7}
\ncline[linecolor=red, linestyle=dashed]{-}{4}{8}
\ncline[linecolor=red, linestyle=dashed]{-}{5}{6}
\ncline{-}{5}{7}
\ncline[linecolor=red, linestyle=dashed]{-}{6}{7}
\ncline[linecolor=red, linestyle=dashed]{-}{6}{8}
\ncline{-}{7}{8}
\ncarc[arcangleA=90, arcangleB=90]{-}{5}{8}

\ncline{-}{1}{4}
\ncline{-}{1}{6}

\ncline{-}{1}{2}
\ncline{-}{2}{3}
\ncline{-}{1}{3}

\psarc[linecolor=blue](2,2){2.5}{210}{10}
\psarc[linecolor=blue](2,2){3}{-10}{70}

\psarc[linecolor=red](2,2){3.5}{59}{125}

\end{pspicture}\\
\end{tabular}
\end{center}
\caption{A circular interval graph (on the left) and a fuzzy circular interval graph (on the right). Dashed lines represent fuzzy adjacencies.}
\label{fig:cig_fcig}
\end{figure}
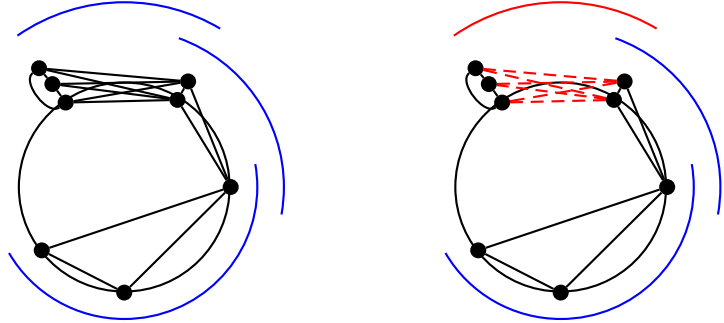

We now recall the definitions of proper and homogeneous pairs of cliques.

\begin{definition}
\label{homo}
Given a graph $G$, a {\em homogeneous} pair of cliques of $G$ is a pair of non-empty vertex disjoint cliques $(K_1, K_2)$ with the property that each $z \not \in (K_1 \cup K_2)$ is either 
complete or anti-complete to $K_1$ and either complete or anti-complete to $K_2$, that is, $z\in (\Gamma(K_1)\cup \overline\Gamma(K_1))\cap (\Gamma(K_2)\cup \overline\Gamma(K_2))$.
\end{definition}

\begin{definition}
\label{proper}
Given a graph $G$, a {\em proper} pair of cliques of $G$ is a pair of non-empty vertex disjoint cliques $(K_1, K_2)$  with the property that each vertex $u \in K_1$ $(K_2$, respectively$)$ is proper [ see Def. \ref{complete} ] to $K_2$ $(K_1)$.
\end{definition}

Proper and homogeneous pairs of cliques (see Fig.~\ref{fig:reduction2} (left)) are called non-trivial homogeneous pairs of cliques in \cite{kr}. The following lemmas, whose simple proofs we skip, link proper and homogeneous pairs of cliques to fuzzy circular interval graphs (the proofs are constructive and rely on the following fact: given a fuzzy pair $(K_1,K_2)$, if $v\in K_1$ is not proper to $K_2$, $v$ is either anti-complete or complete to $K_2$ and one can slightly move $\Phi(v)$ inside or outside the fuzzy interval and recover adjacencies by adding suitable intervals)

\begin{lemma}{\em \cite{EOSV}}
\label{fuzzy1}
Let $G=(V, \Phi, \mathcal{I})$ be a fuzzy circular interval graph. One may build in $O(n^2)$-time a representation for $G$ where each fuzzy pair of cliques is proper and homogeneous.
\end{lemma}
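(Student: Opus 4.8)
The plan is to start from an arbitrary representation $(\Phi,\mathcal{I})$ of $G$ (which exists since $G$ is a {\sc fcig}) and to repair it, one fuzzy pair at a time, so that every fuzzy pair becomes both proper and homogeneous. Fix a fuzzy interval $[p,q]\in\mathcal I$ and let $K_1=\Phi^{-1}(p)$, $K_2=\Phi^{-1}(q)$ be the associated fuzzy pair. First I would record the crucial local observation, hinted at in the parenthetical remark preceding the lemma: by part (b) of Definition \ref{def:fuzzy}, a vertex $v\in K_1$ can fail to be proper to $K_2$ only if it is either complete or anti-complete to $K_2$ (any ``mixed'' behaviour is forced by the interval $[p,q]$ to be full adjacency, hence $v$ is complete). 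Symmetrically for $v\in K_2$. So the obstruction to properness of the pair $(K_1,K_2)$ is precisely the set of vertices in $K_1$ complete or anti-complete to $K_2$, and vice versa.

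Next I would eliminate these vertices by a small perturbation of $\Phi$ that keeps the graph fixed. If $v\in K_1$ is anti-complete to $K_2$, move $\Phi(v)$ slightly counter-clockwise of $p$, to a fresh point $p'\notin\{$ endpoints of intervals in $\mathcal I\}$ just outside $[p,q]$, and add a tiny new interval covering exactly $p'$ together with the part of the old neighbourhood structure at $p$ that $v$ needs — concretely, for every interval $I\in\mathcal I$ containing $p$ with $I\neq[p,q]$, extend $I$ (or add a parallel copy perturbed off the endpoints) so that it still covers $p'$; since $v$ was anti-complete to $K_2$, it does not need $[p,q]$ and moving it out of $[p,q]$ loses exactly the (non-)edges to $K_2$, which is what we want. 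If instead $v\in K_1$ is complete to $K_2$, move $\Phi(v)$ slightly clockwise, into the interior of $[p,q]$, to a fresh point; now $v$ lies in $[p,q]$ but is \emph{not} at an endpoint, so part (b) forces $v$ adjacent to all of $K_2=\Phi^{-1}(q)$ and to all of $K_1=\Phi^{-1}(p)$, again matching $G$, and one adds small intervals to recover $v$'s adjacencies to the remaining vertices exactly as before. Throughout, each new point is chosen distinct from all interval endpoints and from all images of other vertices, so Definition \ref{def:fuzzy}(ii)'s non-degeneracy conditions are preserved, and $|\mathcal I|$ grows by only a constant per moved vertex. After processing all such $v$ (on both sides), the pair associated with $[p,q]$ — if it is still a fuzzy pair at all — is proper.

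Then I would observe that a proper fuzzy pair is automatically homogeneous: if $(K_1,K_2)$ is proper, take any $z\notin K_1\cup K_2$; I claim $z$ is complete or anti-complete to $K_1$. Indeed if $z$ is adjacent to some $u\in K_1$ then by (a) $\Phi(z)$ lies in a common interval with $p=\Phi(u)$; using properness of the pair together with Definition \ref{def:fuzzy}(ii)(b) one checks $\Phi(z)$ cannot be positioned so as to be non-adjacent to another vertex of $K_1$ without violating properness of $(K_1,K_2)$ — essentially because $K_1$ sits at a single point $p$ that is not an endpoint of the interval witnessing the $z$–$u$ edge (that endpoint is ``used up'' by $[p,q]$ on one side), so (b) forces $z$ complete to $K_1$. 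The same argument gives $z$ complete or anti-complete to $K_2$, so the pair is homogeneous. (This is the step where one must be a little careful about which interval endpoint is which; it is the ``homogeneous'' half that needs the properness we just arranged.) Finally, since distinct fuzzy intervals have distinct endpoints and the perturbations above are confined to arbitrarily small neighbourhoods of $p$ and $q$, repairing one fuzzy pair does not create a new non-proper fuzzy pair elsewhere; iterating over the at most $n$ fuzzy intervals, each repair touching $O(n)$ vertices with $O(1)$ work each, gives the $O(n^2)$ bound. The main obstacle is the bookkeeping in the perturbation step — ensuring that adding the small intervals recovers \emph{exactly} $v$'s old adjacencies and nothing more, while respecting the ``no point is an endpoint of two intervals'' and ``none including another'' constraints — but this is routine once the direction of each move (inward for complete, outward for anti-complete) is pinned down.
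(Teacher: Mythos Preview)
Your approach is correct and is exactly the one the paper sketches in the parenthetical remark preceding the lemma (the paper omits the full proof): for each fuzzy interval $[p,q]$, move the non-proper vertices of $\Phi^{-1}(p)$ and $\Phi^{-1}(q)$ slightly inside or outside $[p,q]$ according to whether they are complete or anti-complete to the other side, patching adjacencies with small auxiliary intervals. One small simplification to your plan: homogeneity of a fuzzy pair does not actually need properness --- since all of $K_1$ sits at the single point $p$, and $p$ is an endpoint only of $[p,q]$, any $z$ with $\Phi(z)\notin\{p,q\}$ adjacent to one vertex of $K_1$ lies in an interval having $p$ in its interior and is therefore adjacent to all of $K_1$ by Definition~\ref{def:fuzzy}(ii)(b); thus every fuzzy pair is already homogeneous and only properness needs to be arranged.
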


\begin{lemma}
\label{fuzzy2}
Let $G=(V, \Phi, \mathcal{I})$ be a fuzzy circular (resp.~linear) interval graph. If $G$ has no proper and homogeneous pairs of cliques, then $G$ is a circular (resp.~linear) interval graph.
\end{lemma}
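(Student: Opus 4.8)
The plan is to argue by contraposition, starting from a representation $(\Phi,\mathcal I)$ of the {\sc fcig} $G$ in which every fuzzy pair of cliques is proper and homogeneous — such a representation exists by Lemma \ref{fuzzy1} — and showing that if $G$ is \emph{not} a (circular) interval graph, then $G$ has a proper and homogeneous pair of cliques. The key observation is that the only reason $(\Phi,\mathcal I)$ fails to be an honest circular interval representation is the presence of ``fuzzy'' adjacencies: if there were no fuzzy interval whose two endpoints both lie in $\Phi(V)$, then the interval endpoints could be perturbed slightly (no point of $\mathcal C$ is the end of more than one interval, and each interval has non-empty interior, so there is room) so that no vertex sits on an interval endpoint, and condition (b) of Definition \ref{def:fuzzy} would then force adjacency exactly for pairs sharing an interval — i.e.\ $G$ would be a genuine circular interval graph. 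Hence if $G$ is not circular interval, there must exist at least one fuzzy interval $[p,q]$ with $\Phi^{-1}(p)\neq\emptyset\neq\Phi^{-1}(q)$, giving a fuzzy pair $(K_1,K_2)=(\Phi^{-1}(p),\Phi^{-1}(q))$.

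Next I would invoke the hypothesis on the representation: by Lemma \ref{fuzzy1} the representation was chosen so that \emph{every} fuzzy pair is proper and homogeneous. Therefore $(K_1,K_2)$ is itself a proper and homogeneous pair of cliques of $G$, contradicting the assumption that $G$ has none. This already yields the contrapositive, hence the lemma. The linear case is identical, substituting ``line'' for ``circle'' throughout and using the linear analogue of Lemma \ref{fuzzy1} (which holds by the same constructive argument sketched in the excerpt before Lemma \ref{fuzzy1}).

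The one point that needs care — and the main obstacle — is making precise the claim that ``no fuzzy pair $\Rightarrow$ circular interval graph.'' Concretely: given $(\Phi,\mathcal I)$ with no fuzzy interval having both endpoints in $\Phi(V)$, I would produce a circular interval representation by moving each interval endpoint $p$ that coincides with some $\Phi(v)$ by a tiny amount $\varepsilon>0$ in the appropriate direction so that $v$ ends up strictly inside (if $v$ was adjacent to the vertices across that end) or strictly outside the interval, doing this simultaneously for all endpoints and all intervals while keeping $\varepsilon$ small enough that the cyclic order of all the finitely many relevant points (images of vertices and interval endpoints) is otherwise preserved and no two endpoints collide. One must check that after this perturbation conditions (a)–(b) of Definition \ref{def:fuzzy} degenerate into the circular-interval adjacency rule, i.e.\ $u\sim v$ iff $\Phi(u),\Phi(v)$ lie in a common interval; condition (a) is preserved by construction because we only shrink/enlarge intervals across vertices in the direction dictated by the existing adjacencies, and condition (b) now has no exceptional case since no vertex lies on an interval endpoint. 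A subtlety worth a sentence is the case $\Phi(u)=\Phi(v)$: two vertices mapped to the same point of $\mathcal C$ lie in every interval containing that point and are adjacent (as $G$ is assumed connected, cf.\ the remark that $\Phi^{-1}(p)$ is a clique), so this is consistent with the circular-interval rule. Everything else is routine bookkeeping with finitely many points on a circle.
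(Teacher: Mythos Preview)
Your argument is correct and follows the approach the paper intends (the paper actually omits the proof, giving only the constructive hint that underlies Lemma~\ref{fuzzy1}, which you invoke exactly as one should). One small correction: the lemma does not assume $G$ connected, but you do not need that assumption---when $\Phi(u)=\Phi(v)=p$ lies in some interval $[p,q]\in\mathcal I$, condition~(b) of Definition~\ref{def:fuzzy} already forces $u\sim v$ (since $[p,q]$ is not an interval with \emph{both} endpoints equal to $p$), and if $p$ lies in no interval then the vertices at $p$ are isolated and can simply be spread to distinct points; also, in the no-fuzzy-pair situation the ``strictly outside'' branch of your perturbation never occurs, so every endpoint move is an enlargement.
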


\begin{lemma}
\label{building_block} {\em \cite{CS_Survey}}
Let $(K_1, K_2)$ be a proper pair of non-empty cliques of a graph $G$. Then $G[K_1 \cup K_2]$ contains $C_4$ (a chordless cycle of length 4) as an induced subgraph.
\end{lemma}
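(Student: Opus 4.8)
The plan is to exhibit four distinct vertices $a,b\in K_1$ and $c,d\in K_2$ that induce a $4$-cycle. Since $K_1$ and $K_2$ are cliques, $ab$ and $cd$ will automatically be edges, so it suffices to arrange that $a\sim d$, $b\sim c$, while $a\not\sim c$ and $b\not\sim d$: then $G[\{a,b,c,d\}]$ has exactly the edge set $\{ab,bc,cd,da\}$, i.e. the induced cycle $a-b-c-d-a$. (Recall $K_1,K_2$ are vertex-disjoint, so the only genuine distinctness conditions to check are $a\ne b$ and $c\ne d$.)

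To produce these vertices I would use an extremal choice. Among all vertices of $K_1$, pick $a$ maximizing $|N(a)\cap K_2|$ and write $N_a=N(a)\cap K_2$. Because the pair $(K_1,K_2)$ is proper, $a$ is proper to $K_2$, hence $\emptyset\subsetneq N_a\subsetneq K_2$; in particular we may choose some $c\in K_2\setminus N_a$, so $a\not\sim c$. Again because the pair is proper, $c$ is proper to $K_1$, so $c$ has a neighbour $b\in K_1$; note $b\ne a$ since $a\not\sim c$. Now set $N_b=N(b)\cap K_2$. We have $c\in N_b\setminus N_a$, so $N_a\ne N_b$; together with the maximality inequality $|N_a|\ge |N_b|$ this excludes $N_a\subseteq N_b$, and therefore there exists $d\in N_a\setminus N_b$. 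Then $a\sim d$, $b\not\sim d$, and $d\ne c$ because $d\in N_a$ while $c\notin N_a$. Collecting the relations $ab,cd\in E$, $a\sim d$, $b\sim c$, $a\not\sim c$, $b\not\sim d$ gives the desired induced $C_4$.

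The only mildly delicate step is the neighbourhood comparison: one must argue that the extremal choice of $a$ forbids $N_a\subseteq N_b$, which is exactly where maximality of $|N_a|$ is used (if $N_a\subseteq N_b$ then $|N_a|\ge|N_b|$ forces $N_a=N_b$, contradicting $c\in N_b\setminus N_a$). Everything else is a direct unwinding of the definition of a proper pair. As an alternative route, one could phrase the argument through the bipartite adjacency matrix $M$ between $K_1$ and $K_2$: the absence of the pattern $\bigl(\begin{smallmatrix}0&1\\1&0\end{smallmatrix}\bigr)$ in $M$ is equivalent to the column supports being totally ordered by inclusion, and then the inclusion-minimal column support would have to be empty, contradicting properness; but the extremal-vertex argument above seems the cleanest to write out.
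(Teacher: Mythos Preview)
Your proof is correct: the extremal choice of $a$ together with the properness hypothesis cleanly produces the four vertices with exactly the required adjacencies, and you have checked all distinctness conditions. The paper, however, does not give its own proof of this lemma; it attributes the result to Chudnovsky and Seymour \cite{CS_Survey} and explicitly skips the argument (grouping it with the other ``simple'' lemmas around it), so there is nothing to compare against beyond noting that your write-up supplies what the paper omits.
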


\section{Almost proper and homogeneous pairs of cliques in {\sc fcig}}
\label{sec:prop_hom}

We start with a slight generalization of the concept of proper pair of cliques.

\begin{definition}
Given a graph $G$, an \emph{almost-proper} pair of cliques of $G$ is a pair of non-empty vertex disjoint cliques $(K_1,K_2)$ with the property that every vertex in $K_1$ (resp. $K_2$) is not complete to $K_2$ (resp. $K_1$) and there exists $u\in K_1$, $v\in K_2$ such that $uv\in E$. 
\end{definition}

\begin{definition}
\label{tight}
Let $G=(V, \Phi, \mathcal{I})$ be a fuzzy circular interval graph and $(K_1, K_2)$ be an almost-proper and homogeneous pair of cliques. We say that $(\Phi, \mathcal{I})$ is {\em tight} with respect to $(K_1, K_2)$ if, for some $w\in K_1$ and $z\in K_2$,  either $[\Phi(w), \Phi(z)]$ or $[\Phi(z), \Phi(w)]$ belongs to $\mathcal{I}$.
\end{definition}

\smallskip
In this section we show the following fundamental fact (Theorem \ref{main}): if we are given a representation $(\Phi, \mathcal{I})$ for some {\sc fcig} $G$, together with an almost-proper and homogeneous pair $(K_1, K_2)$ of cliques of $G$, then in time $O(n^2)$ we may build another representation $(\Phi', \mathcal{I}')$, where all the vertices of $K_1$ (resp. $K_2$) ``sit'' on a same point of ${\mathcal C}$. The proof of Theorem \ref{main} builds upon a technical lemma (Lemma \ref{fuzzy}) showing how to build in $O(n^2)$-time for a {\sc fcig} $G=(V, \Phi, \mathcal{I})$ a representation $(\Phi', \mathcal{I}')$ that is tight with respect to some given pair of almost-proper and homogeneous cliques.

In its turn, the proof of Lemma \ref{fuzzy} is easier, if we first discuss one special case that arises when $G$ has small stability number. This motivates the following definition:

\begin{definition}
\label{fuzzy_dom}
Let $G=(V, E)$ be a graph and $(K_1,K_2)$ a homogeneous pair of cliques. Let $S_3$ be the set of vertices that are complete to both $K_1$ and $K_2$, $S_1$ (resp. $S_2$) the set of vertices  complete to $K_1$ (resp. $K_2$) and anti-complete to $K_2$ (resp. $K_1)$. We say that $(K_1,K_2)$ is a {\em fuzzy dominating pair} if $V = K_1\cup K_2 \cup S_1 \cup S_2 \cup S_3$; $S_1$ and $S_2$ are cliques and are complete to $S_3$ (all remaining adjacencies being possible).
\end{definition}

\begin{lemma}
\label{fuzzy_simple}
Let $G=(V, \Phi, \mathcal{I})$ be a connected fuzzy circular interval graph and $(K_1, K_2)$ a homogeneous pair of cliques. In time $O(n^2)$ we can recognize whether $(K_1, K_2)$ is a fuzzy dominating pair and, in this case, build a representation $(\Phi'', \mathcal{I}'')$, such that: $\Phi''(v) = a$, for each $v\in K_1$; $\Phi''(v) = b$, for each $v\in K_2$;  $[a,b]$ or $[b,a] \in \mathcal{I}''$, for any $a\neq b$ on $\mathcal C$. 
\end{lemma}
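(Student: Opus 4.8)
\noindent\emph{Proof plan.} The recognition part is routine and costs $O(n^2)$. Given $(K_1,K_2)$, go through $V\setminus(K_1\cup K_2)$ and record, for each vertex, whether it is complete or anti-complete to $K_1$ and whether it is complete or anti-complete to $K_2$; this splits $V\setminus(K_1\cup K_2)$ into the sets $S_1,S_2,S_3$ of Definition~\ref{fuzzy_dom} together with a fourth, possibly non-empty, class $S_0$ of vertices anti-complete to both cliques. Then $(K_1,K_2)$ is a fuzzy dominating pair precisely when $S_0=\emptyset$, $S_1$ and $S_2$ are cliques, and $S_1$ and $S_2$ are each complete to $S_3$; all of these are $O(n^2)$ tests.

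From now on assume that $(K_1,K_2)$ is a fuzzy dominating pair and set $C_1:=K_1\cup S_1$ and $C_2:=K_2\cup S_2$. Two elementary observations organize the construction. First, $C_1$ and $C_2$ are cliques (each is a clique together with a set complete to it), $S_3$ is complete to $C_1\cup C_2$, and no vertex outside $S_3$ is anti-complete to anything outside $S_3$; hence $G$ is the \emph{join} of $G[S_3]$ and $G[C_1\cup C_2]$. Second, $\alpha(G)\le 2$: a stable set that meets $S_3$ is contained in $S_3$ (as $S_3$ is complete to the rest of $G$) and cannot have three vertices, since three pairwise non-adjacent vertices of $S_3$ together with any vertex of $K_1$ (which is non-empty) would induce a claw, contradicting claw-freeness of {\sc fcig}s; and a stable set disjoint from $S_3$ lies in $C_1\cup C_2$, a union of two cliques, hence has at most two vertices. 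Thus $G$ indeed has ``small stability number''.

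For $S_3=\emptyset$ one writes the representation down explicitly. Collapse $K_1$ to a point $a$, $K_2$ to a point $b$, $S_1$ to a point $p_1$ and $S_2$ to a point $p_2$, arranged in clockwise cyclic order $p_1,a,b,p_2$, so that the arc $[p_2,p_1]$ not passing through $a,b$ carries no vertex. Let $\mathcal I''$ consist of: an interval having $p_1$ and $a$ (and nothing else) in its interior, which makes $C_1$ a clique and $S_1$ complete to $K_1$; the symmetric interval around $b,p_2$; the short arc $[a,b]$, a fuzzy interval whose fuzzy pair $(K_1,K_2)$ realizes the arbitrary $K_1$--$K_2$ adjacencies of $G$; and the long arc $[p_2,p_1]$, a fuzzy interval whose fuzzy pair $(S_2,S_1)$ realizes the arbitrary $S_1$--$S_2$ adjacencies. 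Choosing the endpoints of the first two intervals generically, one checks directly that no point is the end of two intervals, that no interval contains another, that exactly the last two intervals are fuzzy, that every adjacency of $G$ is reproduced (the forced edges come from the two ``clique'' intervals; the chosen edges from the two fuzzy intervals; every required non-edge holds because the remaining pairs lie in no common interval), and that $\Phi''(K_1)=\{a\}$, $\Phi''(K_2)=\{b\}$, $[a,b]\in\mathcal I''$, as required.

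The case $S_3\neq\emptyset$ is the crux, and I expect it to be the main obstacle. Now one must host the whole induced subgraph $G[S_3]$ --- which need not be a clique and, being a {\sc fcig} with $\alpha\le 2$ (for instance a $C_4$, or an octahedron), may itself essentially require the full circle --- while simultaneously keeping $G[C_1\cup C_2]$ (which may likewise contain an induced $C_4$) faithfully represented and $S_3$ complete to $C_1\cup C_2$. The plan is: if some $s\in S_3$ is universal in $G$, dispose of it by reduction (a universal vertex can always be re-placed consistently and $G-s$ still carries the same fuzzy dominating pair), so assume no vertex of $S_3$ is universal; then, using that every $s\in S_3$ is adjacent to all of $C_1\cup C_2$, argue from a suitable starting representation of $G$ that the image of $C_1\cup C_2$ may be taken inside a proper arc of $\mathcal C$, squeeze $G[C_1\cup C_2]$ (with $K_1$ collapsed to $a$, $K_2$ collapsed to $b$ and a short fuzzy interval $[a,b]$ between them, exactly as above) into that arc, place a representation of $G[S_3]$ on the complementary arc, and finally add, for $C_1$ and symmetrically for $C_2$, a ``long'' interval reaching from the $S_3$-arc across $S_1$ and $K_1$ but stopping before $b$, so as to make $S_3$ complete to $C_1\cup C_2$ without creating any $S_1$--$K_2$ or $S_2$--$K_1$ edge. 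The delicate point is the bookkeeping: verifying the endpoint and non-containment axioms of Definition~\ref{def:fuzzy}, guaranteeing that the only fuzzy intervals are those carrying genuinely fuzzy adjacencies, and routing the long $S_3$-to-$C_i$ intervals so that they neither absorb the $S_1$--$S_2$ fuzzy interval nor force unwanted edges. Since every step touches $O(n)$ intervals and $O(n)$ vertices a bounded number of times, the whole procedure runs in $O(n^2)$.
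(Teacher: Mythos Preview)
Your recognition step is fine and matches the paper. The gap is in the construction, specifically in the $S_3\neq\emptyset$ case, which you leave as a plan with acknowledged ``delicate bookkeeping'' rather than a proof. More seriously, the plan is built on a misconception: you worry that $G[S_3]$ ``may itself essentially require the full circle'', and this drives you towards an intricate arc-splicing argument. But it never does. You already observed that $S_3\subseteq N(v)$ for any $v\in K_1$; since every {\sc fcig} is quasi-line, $N(v)$ is the union of two cliques, and hence so is $S_3$. Your weaker conclusion $\alpha(G[S_3])\le 2$ (from claw-freeness) is not enough on its own --- $C_5$ has $\alpha=2$ yet is not co-bipartite --- but the quasi-line property gives the stronger statement directly.

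Once you know $S_3=S_4\cup S_5$ with $S_4,S_5$ cliques, the whole construction collapses: $K_1\cup S_1\cup S_4$ and $K_2\cup S_2\cup S_5$ are cliques (each piece is a clique and the cross-adjacencies are complete by the definition of a fuzzy dominating pair), so $V$ itself is the union of two cliques. Map the first to $a$, the second to $b$, and take $\mathcal I''=\{[a,b]\}$; every edge of $G$ is either inside one of the two cliques or is fuzzy across $[a,b]$, so the representation is immediate. This is exactly the paper's proof, and it dispenses with your case split, the four-point picture for $S_3=\emptyset$, and the unfinished squeezing argument for $S_3\neq\emptyset$.
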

\begin{proof} 
Note that, since $(K_1,K_2)$ is a homogeneous pair, $S_1,S_2$ and $S_3$ can be built in time $O(n)$. In order to check that $(K_1, K_2)$ is a fuzzy dominating pair we then need to check that $V = K_1\cup K_2 \cup S_1 \cup S_2 \cup S_3$, that $S_1$ and $S_3$ are cliques and that $S_1, S_2$ are complete to $S_3$. Trivially, that can be done in time $O(n^2)$.

Now suppose that $(K_1, K_2)$ is a fuzzy dominating pair. Every vertex in $K_1\cup K_2$ is complete to $S_3$; therefore $S_3$ can be partitioned into two cliques $S_4$ and $S_5$ (every fuzzy circular interval graph is \emph{quasi-line} i.e. the neighborhood of any vertex can be partitioned into two cliques), that can be found in time $O(n^2)$. Now the sets $K'_1=S_1 \cup K_1 \cup S_4$ and $K'_2=S_2 \cup K_2 \cup S_5$ are cliques and we can therefore represent $G$ with $(\Phi'', \mathcal{I}'')$ where $\Phi''(v) = a$, for each $v\in K'_1$, $\Phi''(v) = b$, for each $v\in K'_2$ and  $\mathcal{I}''= \{[a,b]\}$ for any $a\neq b$ on the circle.
\end{proof} 

Before going to the proof of Lemma \ref{fuzzy}, we need a few more definitions and a lemma.

\begin{definition}
\label{covercircle}
Three intervals $I_1$, $I_2$ and $I_3$ of ${\mathcal C}$ {\em cover} ${\mathcal C}$ if there exist points $a, b, c$ on ${\mathcal C}$ such that $[a,b]\subseteq I_1$, $[b,c]\subseteq I_2$, $[c,a]\subseteq I_3$.
\end{definition}

\begin{definition}
\label{covercirclebis}
Let $G=(V, \Phi, \mathcal{I})$ be a fuzzy circular interval graph and $Q\subseteq V$. We say that an interval $I \in \mathcal{C}$ {\em covers} $Q$ if $\bigcup_{v\in Q} \Phi(v)\subseteq I$.
\end{definition}

\begin{lemma}
\label{interval}
Let $G=(V, \Phi, \mathcal{I})$ be a fuzzy circular interval graph and $K$ a clique of size two or more. Then, either there exists an interval $I \in \mathcal{I}$ covering $K$, or there exist three intervals $I_1$, $I_2$ and $I_3\in \mathcal{I}$ covering the circle. In the latter case, no vertex of $V\setminus K$ is anti-complete to $K$.
\end{lemma}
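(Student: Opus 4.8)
The plan is to argue entirely on the circle, writing $\Phi(K):=\bigcup_{v\in K}\{\Phi(v)\}$ and $k:=|\Phi(K)|$. First I would dispose of the easy cases. If $k\le 2$, pick $u,v\in K$ realizing the (at most two) points of $\Phi(K)$; since $uv\in E$, Definition~\ref{def:fuzzy}(a) gives an interval $I\in\mathcal{I}$ with $\Phi(u),\Phi(v)\in I$, and then $\Phi(K)\subseteq I$, i.e.\ $I$ covers $K$ in the sense of Definition~\ref{covercirclebis}; so the first alternative holds. The first alternative also holds if some interval of $\mathcal{I}$ contains $\Phi(K)$. Hence from now on I assume $k\ge 3$ and that no interval of $\mathcal{I}$ contains $\Phi(K)$, and I must produce both the three covering intervals and the anti-completeness statement.

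The whole argument rests on one remark. Order $\Phi(K)$ clockwise as $p_1,\dots,p_k$ and fix $u_i\in K$ with $\Phi(u_i)=p_i$. For $i\ne j$ we have $u_iu_j\in E$, so Definition~\ref{def:fuzzy}(a) gives $I_{ij}\in\mathcal{I}$ containing $p_i,p_j$; since $I_{ij}$ is homeomorphic to $[0,1]$ it is a \emph{proper} closed arc of $\mathcal{C}$, so $\Phi(K)\cap I_{ij}$ is a contiguous cyclic run of $\Phi(K)$, which is a proper subset by our assumption. If that run is $\{p_a,\dots,p_b\}$ (clockwise), then $I_{ij}$ cannot contain $p_{b+1}$, hence cannot contain the arc $[p_b,p_a]$, and therefore $I_{ij}\supseteq[p_a,p_b]$; that is, $I_{ij}$ contains the whole clockwise arc spanned by the block of $\Phi(K)$-points it captures. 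In particular, for a cyclically consecutive pair $p_i,p_{i+1}$ the interval $I_{i,i+1}$ contains the short arc $[p_i,p_{i+1}]$ (containing the long arc would capture all of $\Phi(K)$).

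Next I would choose $I^\star\in\mathcal{I}$ whose block $\Phi(K)\cap I^\star$ has \emph{maximum} size $t$ among all intervals meeting $\Phi(K)$ in at least two points (such intervals exist, with $t\ge 2$, by the consecutive pairs). Relabel so $\Phi(K)\cap I^\star=\{p_1,\dots,p_t\}$; then $I^\star\supseteq[p_1,p_t]$ and $p_{t+1}\notin I^\star$. Now consider the interval $I_b:=I_{1,t+1}$: its block is a proper run containing $p_1$ and $p_{t+1}$, so it contains one of the two $\Phi(K)$-segments between them; the segment $\{p_1,\dots,p_{t+1}\}$ has size $t+1>t$, which contradicts the choice of $I^\star$, so the block contains $\{p_{t+1},p_{t+2},\dots,p_k,p_1\}$ and hence $I_b\supseteq[p_{t+1},p_1]$. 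Likewise, $I_a:=I_{t,t+1}$ satisfies $I_a\supseteq[p_t,p_{t+1}]$. Since $[p_1,p_t]\subseteq I^\star$, $[p_t,p_{t+1}]\subseteq I_a$ and $[p_{t+1},p_1]\subseteq I_b$, the three intervals $I^\star,I_a,I_b$ cover $\mathcal{C}$ in the sense of Definition~\ref{covercircle}, with splitting points $p_1,p_t,p_{t+1}$ (distinct because $2\le t\le k-1$).

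For the last assertion I would reuse $I^\star,I_a,I_b$, observing that each contains two \emph{distinct} points of $\Phi(K)$ ($p_1,p_t$; $p_t,p_{t+1}$; $p_{t+1},p_1$). Let $z\in V\setminus K$; its image lies in one of the three arcs, hence in the corresponding interval $I$, which contains two distinct points $q\ne q'$ of $\Phi(K)$. Pick $w,w'\in K$ with $\Phi(w)=q$, $\Phi(w')=q'$. By Definition~\ref{def:fuzzy}(b), $z$ fails to be adjacent to $w$ only if $I$ is an interval with endpoints $\Phi(z)$ and $q$, and fails to be adjacent to $w'$ only if $I$ is an interval with endpoints $\Phi(z)$ and $q'$; since $I$ has a single unordered pair of endpoints and $q\ne q'$, at most one of these occurs, so $z$ is adjacent to $w$ or to $w'$, both in $K$. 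The main obstacle is precisely the choice of $I^\star$ in the third step: one must start from an interval whose trace on $\Phi(K)$ is as large as possible, so that the pair $\{p_1,p_{t+1}\}$ is forced to ``go around the long way''; using instead the $k$ intervals from all consecutive pairs only yields $k$ intervals covering $\mathcal{C}$ and no grip on the anti-complete vertices. The rest is routine arc bookkeeping, plus the elementary point that every interval of $\mathcal{I}$ is a proper closed sub-arc of $\mathcal{C}$.
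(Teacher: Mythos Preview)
Your proof is correct, and it follows a genuinely different route from the paper's.

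The paper argues by induction on $|K|$: it peels off a vertex $v$, applies the hypothesis to $K\setminus v$ to get a covering interval $I_1=[a,b]$, and then, when $\Phi(v)\notin I_1$, uses the adjacencies of $v$ with the extreme vertices of $K\setminus v$ to produce two further intervals $I_2\supsetneq[\Phi(v),a]$ and $I_3\supsetneq[b,\Phi(v)]$; the strict containments (coming from the rule that no point of~$\mathcal C$ is an endpoint of two intervals) are then what drive the anti\nobreakdash-completeness conclusion.

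You instead work directly on the image set $\Phi(K)$ and replace induction by an extremal choice: pick $I^\star$ whose trace on $\Phi(K)$ is largest, and use maximality to force $I_{1,t+1}$ to ``go the long way'' and cover $[p_{t+1},p_1]$. This yields the three covering intervals in one shot. Your anti\nobreakdash-completeness argument is also different and arguably cleaner: since each of the three intervals contains two \emph{distinct} points of $\Phi(K)$, condition~(b) of Definition~\ref{def:fuzzy} together with the fact that an interval has a single unordered pair of endpoints forces every $z$ to be adjacent to at least one of the two corresponding clique vertices. In particular, you never invoke the ``no point is the endpoint of two intervals'' clause, which the paper's argument uses to get the strict inclusions $[\Phi(v),a]\subsetneq I_2$ and $[b,\Phi(v)]\subsetneq I_3$. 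The trade\nobreakdash-off is that your covering step needs the extremal choice of $I^\star$ and the block bookkeeping, whereas the paper's inductive step is shorter but leans on that extra hypothesis on~$\mathcal I$.
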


\begin{proof}
The proof is by induction on the size of $K$. If $|K|=2$, there exists an interval $I \in \mathcal{I}$ covering $K$ by definition of {\sc fcig}s. Now let $K$ be such that $|K|>2$ and $v\in K$. By induction, either there exists an interval $I_1 \in \mathcal{I}$ covering $K \setminus v$, or there exist three intervals covering the circle and no vertex of $(V\setminus K) \cup\{v\}$ is anti-complete to $K \setminus v$. In the latter case, the induction is trivial. Analogously, in the former case, the induction is trivial if $\Phi(v)\in I_1$. So suppose to the contrary that $\Phi(v)\notin I_1$, and assume that $I_1 = [a,b]$. Since $v$ is adjacent to all vertices in $K \setminus v$, it easily follows that either there exists $I_2$ containing $(I_1\cap \Phi(K_1)) \cup \{\Phi(v)\}$, and the result follows, or there must exist $I_2$ and $I_3$, such that $[\Phi(v), a]\subsetneq I_2$ and $[b, \Phi(v)]\subsetneq I_3$ (note that e.g. $[\Phi(v), a]\neq I_2$ because no point of ${\mathcal C}$ is the end of more than one interval of ${\mathcal I}$). Note that $I_1,I_2$ and $I_3$ cover ${\mathcal C}$. Finally, we are left with showing that, in this case, no vertex of $V\setminus K$ is anti-complete to $K$. The statement is trivial for any $u\in V\setminus K$ such that $\Phi(u)\in (a, b)$. So assume that $\Phi(u)\notin (a, b)$, and without loss of generality assume that $\Phi(u)\in [\Phi(v), a]$. Since $[\Phi(v), a]\subsetneq I_2$, it follows that $u$ and $v$ are adjacent, which is enough.
\end{proof}

\begin{lemma}
\label{fuzzy}
Let $G=(V, \Phi, \mathcal{I})$ be a connected fuzzy circular interval graph and $(K_1, K_2)$ be an almost-proper and homogeneous pair of cliques. In time $O(n^2)$ we can either recognize that $(\Phi, \mathcal{I})$ is tight with respect to $(K_1, K_2)$, or build for $G$ another representation $(\Phi', \mathcal{I}')$ that is tight with respect to $(K_1, K_2)$. 
\end{lemma}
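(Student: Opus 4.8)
The plan is to analyze the structure of $\Phi$ on $K_1 \cup K_2$ and show that, unless the representation is already tight, we can locally modify it to force a fuzzy interval between a vertex of $K_1$ and a vertex of $K_2$. First I would apply Lemma~\ref{interval} to the clique $K_1$ (and to $K_2$): either some interval $I \in \mathcal{I}$ covers $K_1$, or three intervals of $\mathcal{I}$ cover $\mathcal{C}$ and no vertex outside $K_1$ is anti-complete to $K_1$. In the latter case, since $(K_1,K_2)$ is almost-proper there is a vertex of $K_2$ not complete to $K_1$, and the homogeneity of the pair then forces $V = K_1 \cup K_2 \cup S_1 \cup S_2 \cup S_3$ with the structure of Definition~\ref{fuzzy_dom} — one checks $S_1,S_2$ are cliques complete to $S_3$ using that $G$ is quasi-line and that the three covering intervals leave no room for non-adjacencies except inside the fuzzy pair. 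So in that branch $(K_1,K_2)$ is a fuzzy dominating pair and Lemma~\ref{fuzzy_simple} directly produces a representation with $K_1$ on a point $a$, $K_2$ on a point $b$, and $[a,b] \in \mathcal{I}''$ — in particular tight. The same reasoning applies symmetrically to $K_2$. Hence we may assume there are intervals $I_1 = [a_1,b_1] \supseteq \Phi(K_1)$ and $I_2 = [a_2,b_2] \supseteq \Phi(K_2)$ in $\mathcal{I}$.

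Next I would use the almost-proper condition to locate an edge $uv$ with $u \in K_1$, $v \in K_2$: then $\Phi(u)$ and $\Phi(v)$ lie in a common interval $J \in \mathcal{I}$ with $\Phi(u) \in I_1$, $\Phi(v) \in I_2$. I would consider the two arcs $[\Phi(K_1)]$ and $[\Phi(K_2)]$ (the shortest arcs containing $\Phi(K_1)$, resp. $\Phi(K_2)$) and distinguish according to whether they are disjoint, nested, or overlapping. If $\Phi(K_1)$ and $\Phi(K_2)$ share a common point $p$, or if one arc is nested inside an interval covering the other, one shows (using that no vertex in $K_1$ is complete to $K_2$ and no vertex in $K_2$ is complete to $K_1$, together with the fuzziness rule~(b)) that there must already be a fuzzy interval with one endpoint the image of a $K_1$-vertex and the other the image of a $K_2$-vertex — i.e. the representation is tight — because otherwise property~(b) of Definition~\ref{def:fuzzy} would force some forbidden complete adjacency. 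The interesting case is when the two arcs are essentially side by side: then the "gap" between the clockwise-last vertex $w$ of $K_1$ (on the side facing $K_2$) and the counterclockwise-first vertex $z$ of $K_2$ is crossed by some interval of $\mathcal{I}$ (because $w$ and $z$, or nearby vertices joined by the edge $uv$, lie in a common interval). I would then push $\Phi$ slightly: move the vertices of $K_1$ and $K_2$ toward each other and, if necessary, trim that interval so that $[\Phi'(w),\Phi'(z)]$ becomes exactly an interval of $\mathcal{I}'$, while adding short intervals around $\Phi'(K_1)$ and around $\Phi'(K_2)$ to preserve all old adjacencies. Checking that this perturbation changes no adjacency outside the pair is where homogeneity is used: every outside vertex is complete or anti-complete to each $K_i$, so it "sees" all of $\Phi'(K_i)$ or none of it, and a small enough move keeps its covering intervals valid.

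For the running time, all of the above — computing $S_1,S_2,S_3$, testing the fuzzy-dominating condition, locating the covering intervals $I_1,I_2$, finding the edge $uv$, identifying $w,z$, and performing the perturbation — involves a bounded number of scans over the $O(n)$ vertices and $O(n)$ intervals (using Assumption~\ref{number_of_int}) with $O(n)$ adjacency checks each, so it fits in $O(n^2)$ time; the appeal to Lemma~\ref{fuzzy_simple} is itself $O(n^2)$.

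The main obstacle I anticipate is the case analysis on the relative position of the arcs $\Phi(K_1)$ and $\Phi(K_2)$ together with a careful bookkeeping of which intervals must be shortened or added during the perturbation: one has to be sure that after moving vertices and editing $\mathcal{I}$, condition~(ii) of Definition~\ref{def:fuzzy} still holds — in particular that no point becomes the endpoint of two intervals and that no interval contains another — and that exactly one fuzzy interval now joins $K_1$ to $K_2$. Getting the perturbation "small enough" uniformly, and handling the degenerate sub-cases where $K_1$ or $K_2$ is a single vertex or where $\Phi$ is already constant on one of them, will require the most care.
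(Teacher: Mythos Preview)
Your opening moves---invoking Lemma~\ref{interval}, falling back on Lemma~\ref{fuzzy_simple} in the fuzzy-dominating case, and then locating covering intervals for $K_1$ and $K_2$---match the paper. But from that point on your proposal diverges from the paper's argument and, at the step you yourself flag as the ``main obstacle,'' it is genuinely incomplete.

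The paper does \emph{not} argue by a small perturbation. Instead it proves, via a chain of structural claims, that (assuming the representation is not tight and the pair is not fuzzy dominating) the minimal arcs $I'_1=[a_1,b_1]\supseteq\Phi(K_1)$ and $I'_2=[a_2,b_2]\supseteq\Phi(K_2)$ are disjoint; that no interval of $\mathcal I$ covering $I'_1$ meets $I'_2$ and vice versa; and that every interval of $\mathcal I$ touching $I'_1$ can only meet $I'_2$ in points whose preimage lies in $K_2$ (and symmetrically). With this in hand, it defines $l\in(a_1,b_1]$ and $r\in[a_2,b_2)$ as the extreme endpoints (inside $I'_1$, $I'_2$) of the family $\mathcal J\subseteq\mathcal I$ of intervals containing $[b_1,a_2]$, collapses \emph{all} of $K_1$ to $l$ and \emph{all} of $K_2$ to $r$, and replaces the whole family $\mathcal J$ by the single interval $[l,r]$. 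Verifying that this preserves every adjacency still takes a further claim and a three-case check; the non-containment and distinct-endpoint conditions in Definition~\ref{def:fuzzy}(ii) then come for free because we only \emph{removed} intervals and added one whose endpoints were already endpoints of members of~$\mathcal J$.

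Your ``push $\Phi$ slightly and trim'' plan would have to reproduce all of this work anyway. In particular, you cannot simply move $w$ and $z$ toward each other and shorten one interval: there may be many intervals in $\mathcal I$ crossing the gap $[b_1,a_2]$ with staggered endpoints, and keeping just one while preserving adjacencies of vertices in $(b_1,a_2)$ to $K_1\cup K_2$ is exactly what forces the careful choice of $l$ and $r$ and the structural claims above. Your proposed addition of ``short intervals around $\Phi'(K_1)$ and $\Phi'(K_2)$'' would immediately threaten the no-nesting condition in Definition~\ref{def:fuzzy}(ii), and ``small enough'' has no obvious meaning here since vertex images are discrete. Also, a minor point: your reduction to the fuzzy-dominating case treats $K_1$ and $K_2$ separately, but from ``no vertex outside $K_1$ is anti-complete to $K_1$'' alone you cannot yet conclude that $S_1,S_2$ are cliques complete to $S_3$; the paper only draws the fuzzy-dominating conclusion when \emph{neither} $K_i$ is covered, and then establishes that the other one is covered too by a separate argument.
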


\begin{proof} 
We assume that $(K_1, K_2)$ is not a fuzzy dominating pair, else we are done by Lemma \ref{fuzzy_simple}. We can recognize whether $(\Phi, \mathcal{I})$ is tight with respect to $(K_1, K_2)$ in time $O(n^2)$ (recall that we are assuming that $|\mathcal{I}|\leq n$). In the following, we therefore assume that $(\Phi, \mathcal{I})$ is {\em not} tight with respect to $(K_1, K_2)$. We also assume that, for every fuzzy interval in ${\mathcal I}$, every vertex mapped at one of the extremities has an adjacent and a non-adjacent vertex mapped at the other extremity (see Lemma \ref{fuzzy1}, the transformation obviously preserves $(\Phi, \mathcal{I})$ {\em not} tight with respect to $(K_1, K_2)$).

\smallskip
We first show that there exist intervals $I_1, I_2 \in \mathcal{I}$ such that $I_1$ covers $K_1$ or $I_2$ covers $K_2$. In fact, from Lemma \ref{interval}, if no interval of ${\mathcal I}$  covers $K_1$, then no vertex of $V\setminus K_1$ is anti-complete to $K_1$; thus, by homogeneity, each vertex $z\in V\setminus (K_1\cup K_2)$ is complete to $K_1$. 
Similarly, if there is no interval covering $K_2$,  then each vertex $z\in V\setminus (K_1\cup K_2)$ is complete to $K_2$. But then $(K_1,K_2)$ is a fuzzy dominating pair with $S_1=S_2=\emptyset$ and $S_3=V\setminus (K_1\cup K_2)$, a contradiction. 

We can thus assume without loss of generality that there exists an interval $I_1 \in \mathcal{I}$ covering $K_1$. We also define $I'_1:=[a_1,b_1]\subseteq I_1$ to be the smallest interval of ${\mathcal C}$ covering $K_1$. (Notice that $I'_1$ might not be an interval of  $\mathcal{I}$.)  Observe that $a_1\neq b_1$. Indeed, otherwise, since there exists $u\in K_1$, $v\in K_2$ such that $uv\in E$, it would follow that there is an interval $I\in {\mathcal I}$ covering $K_1$ and $v$. But because each vertex of $K_2$ is not complete to $K_1$, necessarily either $I=[a_1, \Phi(v)]$ or $I=[\Phi(v), a_1]$, and this contradicts the assumption that $(\Phi, \mathcal{I})$ is not tight with respect to $(K_1, K_2)$. Note that, since $I'_1\subseteq I_1$, a similar argument shows that no vertex $v\in K_2$ is such that $\Phi(v)\in I'_1$. Therefore, we may define $I'_2:=[a_2,b_2]$ to be the smallest interval in $\setC\setminus I'_1$ covering $K_2$; it follows that $I'_1\cap I'_2 = \emptyset$. Also, by similar arguments as above, $a_2\neq b_2$. Now there exists $I_2$ covering $[a_2,b_2]$ because otherwise there would exist an interval containing $[b_2,a_2]$ ($K_2$ is a clique and the vertices that map to $a_2$ and $b_2$ are adjacent) and thus $[a_1,b_1]$ would be in the interior of this interval and some vertices of $K_2$ (e.g. those that map to $a_2$ or $b_2$) would be complete to $K_1$, a contradiction.
 
It is convenient to summarize our results so far in the following:

\begin{claim}\label{recap}
There exist intervals $I_1, I_2 \in \mathcal{I}$ such that $I_1$ covers $K_1$ and $I_2$ covers $K_2$ and with the property that if we let $I'_1:=[a_1,b_1]\subseteq I_1$ be the smallest interval of ${\mathcal C}$ covering $K_1$ and $I'_2:=[a_2,b_2]\subseteq I_2$ be the smallest interval of ${\mathcal C}$ covering $K_2$, then $I'_1\cap I'_2 = \emptyset$. Note that, by definition, for $i = 1,2$, $K_i\cap \Phi^{-1}(a_i)\neq \emptyset$ and $K_i\cap \Phi^{-1}(b_i)\neq \emptyset$.
\end{claim}

\begin{claim}\label{claim:2}
For all $\overline I_1\in \setI$ such that $I'_1\subseteq \overline I_1$, we have  $\overline I_1\cap I'_2=\emptyset$, and, similarly, for all $\overline I_2\in \setI$ such that $I'_2\subseteq \overline I_2$,  we have $\overline I_2\cap I'_1=\emptyset$.
\end{claim}
Let us show that, for each $\overline I_2\in \setI$ with $I'_2\subseteq \overline I_2$, then  $\overline I_2\cap I'_1=\emptyset$. Indeed, otherwise, there exists $v\in K_1:$ $\Phi(v)\in \overline I_2$. We can assume without loss of generality that $v\in \Phi^{-1}(b_1)$ and $[b_1,b_2] \subseteq {\overline I_2}$. But then either we have $v$ or $[a_2,b_2]$ in the interior of ${\overline I_2}$ and in both cases $v$ is complete to $K_2$, a contradiction, or we have $\overline I_2=[b_1,b_2]$, and this contradicts the assumption that $(\Phi, \mathcal{I})$ is not tight with respect to $(K_1, K_2)$.
\bqed

\begin{claim}\label{claim:5} 
If there exist intervals $\overline I_1, \overline I_2, \overline I_3, \overline I_4$ of $\setI$: $\overline I_1\supseteq [a_1,b_1]$; $\overline I_2\supseteq [a_2,b_2]$; $\overline I_3\supsetneq [b_1,a_2]$; $\overline I_4\supsetneq [b_2,a_1]$, then $(K_1,K_2)$ is a fuzzy dominating pair.
\end{claim}
We now show that then $(K_1,K_2)$ would be a fuzzy dominating pair, with $S_1=\Phi^{-1}((a_1,b_1))\setminus K_1$, $S_2=\Phi^{-1}((a_2,b_2)))\setminus K_2$, $S_3=\Phi^{-1}([b_1,a_2]\cup [b_2,a_1])\setminus (K_1\cup K_2)$, a contradiction. Indeed, because of $\overline I_4$, each vertex $v\notin K_1\cup K_2$ such that $\Phi(v)\in[b_2, a_1]$ is adjacent to some vertex in $\Phi^{-1}(a_1)\cap K_1$ and to some vertex in $\Phi^{-1}(b_2)\cap K_2$, and therefore, by homogeneity, is complete to $K_1\cup K_2$. Analogously, because of $\overline I_3$, each vertex $v\notin K_1\cup K_2$ such that $\Phi(v)\in [b_1, a_2]$ is complete to $K_1\cup K_2$. Therefore, the vertices of $V\setminus (K_1\cup K_2)$ that are not complete to $K_1\cup K_2$ are  in $(a_1,b_1)\cup (a_2,b_2)$, and therefore in $S_1\cup S_2$. Moreover, the vertices $v:\Phi(v)\in (a_1,b_1)$ are complete to each other (because they are in the interior of the interval $\overline I_1$) and similarly the vertices $v:\Phi(v)\in (a_2,b_2)$ are complete to each other: therefore, $S_1$ and $S_2$ are cliques. In order to show that $(K_1, K_2)$ is a fuzzy dominating pair, we are then left with proving that $S_1$ is complete to $S_3$ and $S_2$ is complete to $S_3$. Note that any vertex $v\notin K_1\cup K_2$ with $\Phi(v) \in [b_2,a_1]$ is complete to $\Phi^{-1}(a_1)$, because of $\overline I_4$, and therefore, by homogeneity, it is adjacent to every vertex in $\Phi^{-1}(b_1)\cap K_1$. Hence, there must exist an interval containing $[\Phi(v),b_1]$ (and not $[b_1,\Phi(v)]$, as we would contradict Claim \ref{claim:2}), and thus $v$ is complete to $S_1$ by definition. Similarly $v$ is complete to $S_2$. Using similar arguments, we can show that any vertex $v\notin K_1\cup K_2$ with $\Phi(v) \in [b_1,a_2]$ is complete to $S_1\cup S_2$. 
\bqed

\begin{claim}\label{claim:3}
For all $\overline I_1\in \setI$ such that $I'_1\cap \overline I_1\neq \emptyset$, we have $\Phi^{-1}(\overline I_1\cap I'_2) \subseteq K_2$, and, similarly, for all $\overline I_2\in \setI$ such that $I'_2\cap \overline I_2\neq \emptyset$, we have $\Phi^{-1}(\overline I_2\cap I'_1) \subseteq K_1$.
\end{claim}
Let us prove the first case. Suppose by contradiction that there exists an interval $\overline I_1\in \mathcal{I} : \overline I_1\cap I'_1\neq \emptyset$ covering $z \in \Phi^{-1}(I'_2) \setminus K_2$. Without loss of generality, let us assume that $b_1\in \overline I_1$ and that $[b_1, a_2]\subseteq [b_1, \Phi(z)] \subseteq \overline I_1$. In particular, $[b_1, a_2] \neq \overline I_1$, as otherwise we would contradict the assumption that $(\Phi, \mathcal{I})$ is not tight with respect to $(K_1, K_2)$.

First suppose that $z$ is adjacent to some vertex in $K_1 \cap \Phi^{-1}(b_1)$. Then, by homogeneity, there must exist an interval $L\in \mathcal{I}$ such that either $[\Phi(z),a_1]\subseteq L$ or $[a_1, \Phi(z)]\subseteq L$, but this latter case is ruled out by Claim \ref{claim:2}. Therefore, $[b_2, a_1]\subseteq [\Phi(z), a_1] \subseteq L$ and $L\neq [b_2,a_1]$, again by our assumptions. Summarizing, the following intervals belong to $\setI$: $I_1\supseteq [a_1,b_1]$; $I_2\supseteq [a_2,b_2]$; $\overline I_1\supsetneq [b_1,a_2]$; $L\supsetneq [b_2,a_1]$. But Claim \ref{claim:5} shows that this is a contradiction.

For the same reason, it follows that each vertex in $\Phi^{-1}(\Phi(z))\setminus K_2$ is anti-complete to $K_1 \cap \Phi^{-1}(b_1)$; but then $\overline I_1=[b_1, \Phi(z)]$ is a fuzzy interval. Now, because of our assumptions, there is no vertex $v\in K_2$ such that $\Phi(v) = \Phi(z)$. Therefore, each vertex $v\in K_1$ such that $\Phi(v) = b_1$ is anti-complete to $\Phi^{-1}(\Phi(z))$, a contradiction with the fact that each vertex at the extremity of a fuzzy interval is adjacent to some vertex at the other extremity.
\bqed

\smallskip
We are almost ready to build for $G$ our alternative representation $(\Phi',\setI')$ that is tight with respect to $(K_1, K_2)$. Note that, since $(K_1, K_2)$ is an almost-proper pair of cliques, there exists an edge $uv\in E$ with $u\in K_1,v\in K_2$. Now, since $u$ and $v$ are adjacent, there is an interval $J^*\in \setI$ covering $u$ and $v$. We can assume without loss of generality that $[b_1,a_2]\subseteq[\Phi(u),\Phi(v)]\subseteq J^*$. Let $\mathcal{J} \subseteq \mathcal{I}$ be the family of all intervals containing $[b_1,a_2]$: observe that each interval in $\mathcal{J}$ intersects $I'_1$, $I'_2$ but is neither containing $I'_1$ nor $I'_2$ (by Claim \ref{claim:2}) and does not cover any vertex $y \in \Phi^{-1}(I'_1) \cup \Phi^{-1}(I'_2)$  which is not in $K_1\cup K_2$  (by Claim \ref{claim:3}). We therefore define $l$ to be the closest extremity to $a_1$ in $I'_1$ of all the intervals in $\mathcal{J}$ and $r$ to be the closest extremity to $b_2$ in $I'_2$ of all the intervals in $\mathcal{J}$: note that $l\in (a_1, b_1]$ and $r\in [a_2, b_2)$ by Claim \ref{claim:2}. By definition, each interval $J\in \mathcal{J}$ is such that $J\subseteq [l,r]$. It follows from Claim \ref{claim:3} that $\Phi^{-1}([l, b_1]) \subseteq K_1$, $\Phi^{-1}([a_2, r]) \subseteq K_2$; moreover, $\Phi^{-1}((b_1, a_2)) \cap (K_1 \cup K_2) = \emptyset$.

We then define $\setI':=\setI \setminus \mathcal{J} \cup [l,r]$; $\Phi'(x)=\Phi(x)$ for all $x\in V\setminus(K_1\cup K_2)$, $\Phi'(x)=l$ for all $x\in K_1$ and $\Phi'(x)=r$ for all $x\in K_2$. We show in the following that the pair $(\Phi',\setI')$ defines the same adjacencies as the pair $(\Phi,\setI)$ and therefore that $(\Phi',\setI')$ is a representation of $G$ (note that no point of ${\mathcal C}$ is the end of more than one interval of $\setI'$ and no interval of $\setI'$ include another: this follows by construction and because $(\Phi,\setI)$ holds this property). Moreover $(\Phi',\setI')$ is tight with respect to $(K_1, K_2)$ by construction and, as it is easy to check, it can be built in time $O(n)$.

\smallskip

\begin{claim}\label{claim:4}
For any vertex $v\in V\setminus (K_1\cup K_2)$ such that there exists an interval of $\setI$ containing either $\Phi(v)$ and $a_1$, or  $\Phi(v)$ and $b_1$, there exists $I\in \setI$ such that $I$ contains $[a_1,b_1]$ and $\Phi(v)$.
\end{claim}
Indeed assume first that $\Phi(v)$ and $a_1$ are contained in an interval $J$ of $\setI$ and suppose the result does not hold. Observe that $\Phi(v)\not\in [a_1,b_1]$ otherwise the statement would hold with $I=I_1$. Necessarily $[\Phi(v),a_1]\subseteq J$ (else the statement holds again trivially). But then $\Phi(v) \not\in (b_1,a_2]$ because this would contradict Claim \ref{claim:2}. But we have also $\Phi(v) \not\in [a_2,r]$ as $\Phi^{-1}([a_2, r]) \subseteq K_2$. Thus $\Phi$ is in $(r,  a_1)$. If there exists $w\in \Phi^{-1}(\Phi(v))\setminus K_2$ adjacent to $\Phi^{-1}(a_1) \cap K_1$, then by homogeneity, $w$ is adjacent to $\Phi^{-1}(b_1)\cap K_1$ and since there does not exist an interval covering $[b_1,\Phi(v)]$ (it would contradict the definition of $r$ or Claim \ref{claim:2}), there is an interval covering $[\Phi(v),b_1]$, a contradiction. But then $\Phi^{-1}(a_1)\cap K_1$ is anti-complete to $\Phi^{-1}(\Phi(v))\setminus K_2$ and, in particular, neither $a_1$ nor $\Phi(v)$ is in the interior of $J$ and thus necessarily $J=[\Phi(v),a_1]$.  But $\Phi^{-1}(\Phi(v))\setminus K_2=\Phi^{-1}(\Phi(v))$ because else we contradict our assumption that $(\Phi,\setI)$ is not tight with respect to $(K_1,K_2)$. But now this contradicts our assumption that for every fuzzy interval in ${\mathcal I}$, every vertex mapped at one of the extremities has an adjacent and a non-adjacent vertex mapped at the other extremity. 

Suppose now that $\Phi(v)$ and $b_1$ belong to some interval $J$ of $\setI$ and suppose the result does not hold. Observe again that $\Phi(v)\not\in [a_1,b_1]$ otherwise the statement would hold with $I=I_1$ and thus again necessarily $[b_1,\Phi(v)]\subseteq J$ (else the statement holds again trivially). But $\Phi(v) \not\in [b_2,a_1)$ because of Claim \ref{claim:2}  and $\Phi(v) \not\in (r,b_2)$ by definition of $r$. We already observed that there is no vertex of $V\setminus (K_1\cup K_2)$ in $[a_2,r]$ thus $\Phi(v)\in (b_1,a_2)$. Now, because of interval $J^*$, $v$ is adjacent to $K_1$ and thus complete by homogeneity. Therefore there is an interval containing $a_1,\Phi(v)$ and this has to cover $[a_1,b_1]$ because otherwise this would contradict Claim \ref{claim:2}. But this is a contradiction.  
\bqed

We now show that the pair $(\Phi',\setI')$ defines the same adjacencies as the pair $(\Phi,\setI)$. We split the analysis of adjacencies into 3 cases: 

(i) Let $v$ be a vertex of $V\setminus (K_1\cup K_2)$, and first suppose it is complete to $K_1$. As a a consequence of Claim \ref{claim:4}, there exists an interval $\tilde I\in \setI$ containing $\Phi(v)$, $a_1$ and $b_1$, and, since this interval is not in $\mathcal{J}$, it also belongs to $\setI'$. Note that $l\in \tilde I$, and therefore $\Phi'(v)$ and $\Phi'(K_1)$ belong to $\tilde I\in \setI'$, meaning that we can preserve $v$ complete to $K_1$. Now suppose that $v$ is anti-complete to $K_1$. Note that $\Phi(v)\in (r,a_1)$, therefore, if $v$ is no more anti-complete to $K_1$, it is because $l$ and $\Phi(v)$ belong to some interval $\tilde I\in \setI$. It follows that either $[l, \Phi(v)]\subseteq \tilde I$, but this contradicts the definition of $r$, or $[\Phi(v), l]\subseteq \tilde I$, but then $v$ is not anti-complete to $K_1$, as $a_1$ is in the interior of $\tilde I$. The same holds for adjacencies between vertices of $V\setminus (K_1\cup K_2)$  and $K_2$. Thus adjacencies between vertices of $V\setminus (K_1\cup K_2)$  and $K_1\cup K_2$ are preserved.

(ii) Adjacencies between two vertices $u,v$ in $V\setminus (K_1\cup K_2)$ is unchanged because $\Phi'(u)=\Phi(u)$,  $\Phi'(v)=\Phi(v)$ and  $\Phi(u) , \Phi(v) \not\in [l,b_1] \cup [a_2,r]$. Indeed, if $uv \in E$, either there exists $J\not\in \mathcal J$ covering $u$ and $v$ or $\Phi(u),\Phi(v)\in (b_1,a_2)$, and in both cases adjacencies are preserved. Similarly, if not adjacent, at least one of $u$ or $v$ is in $(r,l)$ and thus no additional interval is added in $\setI'$ that could add the adjacency between $u$ and $v$.

(iii) Adjacencies between vertices in $K_1\cup K_2$ can be made arbitrary thanks to fuzziness of interval $[l,r]$.
\end{proof} 

\begin{theorem}
\label{main}
Let $G=(V, \Phi, \mathcal{I})$ be a connected fuzzy circular interval graph.
Let $(K_1, K_2)$ be an almost-proper and homogeneous pair of cliques. Then we can build in $O(n^2)$-time a representation $(\Phi'', \mathcal{I}')$, such that: $\Phi''(v) = a$, for each $v\in K_1$; $\Phi''(v) = b$, for each $v\in K_2$;  $[a,b]$ or $[b,a] \in \mathcal{I}'$, for some $a\neq b \in \mathcal C$. 
\end{theorem}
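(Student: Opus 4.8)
The plan is to reduce, via Lemmas~\ref{fuzzy_simple} and~\ref{fuzzy}, to a representation that is \emph{tight} with respect to $(K_1,K_2)$, and then to collapse the two cliques onto the endpoints of a witnessing fuzzy interval.

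First I would dispose of the easy cases. If $(K_1,K_2)$ is a fuzzy dominating pair, Lemma~\ref{fuzzy_simple} directly produces, in $O(n^2)$ time, a representation of exactly the required shape, and we are done. Otherwise we apply Lemma~\ref{fuzzy}: in $O(n^2)$ time we either learn that the given $(\Phi,\mathcal{I})$ is already tight with respect to $(K_1,K_2)$, or we obtain a new tight representation $(\Phi',\mathcal{I}')$. In the latter sub-case one simply observes that the representation constructed in the proof of Lemma~\ref{fuzzy} already satisfies the conclusion: it maps all of $K_1$ to a single point $l$, all of $K_2$ to a single point $r$ (with $l\neq r$ because $I'_1\cap I'_2=\emptyset$ there), and contains $[l,r]$. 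So the only genuine work is the following: starting from a representation $(\Phi,\mathcal{I})$ that is tight with respect to $(K_1,K_2)$ but for which $(K_1,K_2)$ is \emph{not} a fuzzy dominating pair, build a collapsed representation.

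For this, fix a fuzzy interval $[p,q]\in\mathcal{I}$ witnessing tightness, with $p=\Phi(w)$, $w\in K_1$, and $q=\Phi(z)$, $z\in K_2$ (the reverse orientation being symmetric); here $p\neq q$ since intervals of $\mathcal{I}$ are non-degenerate. I would keep $\mathcal{I}':=\mathcal{I}$, set $\Phi''(x):=\Phi(x)$ for $x\notin K_1\cup K_2$, $\Phi''(x):=p$ for $x\in K_1$ and $\Phi''(x):=q$ for $x\in K_2$, and prescribe the fuzzy adjacencies of $[p,q]$ to match those of $G$ inside and between $K_1$ and $K_2$. The verification that this is a representation of $G$ goes by pair type and rests on two observations. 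First, since $[p,q]\in\mathcal{I}$ and no interval of $\mathcal{I}$ includes another and no point of $\mathcal{C}$ is the end of two intervals, $[p,q]$ is the \emph{only} interval of $\mathcal{I}$ meeting both $p$ and $q$; hence pairs with one vertex at $p$ and one at $q$ (in particular all of $K_1\times K_2$) are governed only by the fuzzy interval $[p,q]$ and can be set arbitrarily, while pairs with both vertices at $p$, or both at $q$, are \emph{forced} adjacent by $[p,q]$ --- which keeps $K_1$ and $K_2$ cliques. Second, for $y\notin K_1\cup K_2$, homogeneity makes $y$ complete or anti-complete to $K_1$ (and to $K_2$), and, because $w$ already sits at $p$, that pattern survives the move: an interval realizing $y\sim w$ still covers $p=\Phi''(v)$ for every $v\in K_1$, and if $y$ is anti-complete to $K_1$ then $w\not\sim y$ already precludes any non-fuzzy interval through $p$ and $\Phi(y)$, so no unwanted edge is created. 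Adjacencies between two vertices outside $K_1\cup K_2$ are untouched. All of this costs $O(n^2)$ --- the complete/anti-complete classification being $O(n)$ by homogeneity, as in the proof of Lemma~\ref{fuzzy_simple} --- and $(\Phi'',\mathcal{I}')$ has the stated form with $a=p$, $b=q$.

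I expect the main obstacle to be the bookkeeping in this last verification: one must rule out that collapsing $K_1$ onto $p$ ever forces an edge to a vertex anti-complete to $K_1$, even when several vertices of $V\setminus(K_1\cup K_2)$ share an image point and split between complete and anti-complete to $K_1$ --- and symmetrically for $K_2$, and for the interaction of the two collapses. The clean resolution is that any such configuration would already be contradictory in $(\Phi,\mathcal{I})$ (a non-fuzzy interval through $p$ would force $w$ adjacent to the offending vertex), so only fuzzy intervals can be involved and the adjacencies can be fixed vertex by vertex.
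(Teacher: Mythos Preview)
Your approach is essentially the paper's: obtain a tight representation via Lemma~\ref{fuzzy}, then collapse $K_1$ onto $\Phi'(u)$ and $K_2$ onto $\Phi'(v)$ and verify adjacencies case by case using homogeneity. The paper proceeds more uniformly---it invokes Lemma~\ref{fuzzy} once (which already absorbs the fuzzy-dominating case through Lemma~\ref{fuzzy_simple}) and then performs the collapse regardless of which branch was taken---so your separate handling of the dominating case and your shortcut observing that the representation constructed inside the proof of Lemma~\ref{fuzzy} already has the required shape are both correct but redundant.

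One step in your verification needs more care. You claim that $[p,q]$ is the \emph{only} interval of $\mathcal{I}$ meeting both $p$ and $q$, appealing to no-inclusion and no-shared-endpoints. Those two properties only rule out another interval having $p$ or $q$ as an \emph{endpoint}; they do not by themselves exclude an interval $J\in\mathcal{I}$ with both $p$ and $q$ in its \emph{interior}: such a $J$ would contain the complementary arc $[q,p]$, and neither $J\subseteq[p,q]$ nor $[p,q]\subseteq J$ need hold. If such a $J$ existed, the collapsed adjacencies across $K_1\times K_2$ would be forced rather than fuzzy and the argument would break. What actually rescues you is the hypothesis already in place at that point of your proof: $J$ together with $[p,q]$ would cover $\mathcal{C}$, so every vertex $s\notin K_1\cup K_2$ would lie in the interior of one of them and be forced adjacent to both $w$ (at $p$) and $z$ (at $q$); by homogeneity $s$ would then be complete to $K_1\cup K_2$, making $(K_1,K_2)$ a fuzzy dominating pair---which you have already excluded. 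With this addendum your argument goes through and matches the paper's.
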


\begin{proof}
We know from Lemma \ref{fuzzy} that in time $O(n^2)$ we may build for $G$ a representation $(\Phi', \mathcal{I}')$ that is tight with respect to $(K_1, K_2)$, i.e. is such that, for some $u\in K_1$ and $v\in K_2$,  either $[\Phi(u), \Phi(v)]$ or $[\Phi(v), \Phi(u)]$ belongs to $\mathcal{I}$. We now show that we can build in $O(n)$-time from $\Phi'$ another mapping $\Phi''$ such that $(\Phi'',\setI')$ gives another representation for $G$ and satisfies the properties in the statement. Namely, define $\Phi''$ as follows: for every vertex $x \not\in K_1 \cup K_2$, let $\Phi''(x) = \Phi'(x)$;  for every vertex $x \in K_1$, let $\Phi''(x) = \Phi'(u)$;  for every vertex $x \in K_2$, let $\Phi''(x) = \Phi'(v)$.

In order to prove that $(\Phi'', \mathcal{I}')$ is a representation for $G$, it is enough to show that $(\Phi'', \mathcal{I}')$ and $(\Phi', \mathcal{I}')$ define the same adjacencies. In particular, it suffices to show that the neighborhood of every vertex $x$ such that $\Phi''(x) \neq \Phi'(x)$ remains the same: observe that such a vertex must belong to $K_1 \cup K_2$. Without loss of generality~choose $x \in K_1$. Now consider $y\in V\setminus x$. If $\Phi''(y) = \Phi'(v)$, then the adjacency between $x$ and $y$ is fuzzy in the new representation and of course we can preserve it. If $\Phi''(y) = \Phi'(u)$, then according to the new representation, $y$ is adjacent to $x$. We now show this to be correct. First, $y\not\in K_2$, since $\Phi''(y) \neq \Phi'(v)$. If $y\in K_1$, then adjacency between $y$ and $x$ follows from $K_1$ being a clique. If $y\not\in K_1\cup K_2$, then $\Phi'(y) = \Phi''(y) = \Phi'(u)$ and so $uy\in E$: adjacency between $x$ and $y$ follows then from homogeneity. Analogously, if $\Phi''(y) \not\in  \{\Phi'(v), \Phi'(u)\}$, then in particular $y\not\in K_1\cup K_2$ and thus the adjacency between $x$ and $y$ is the same as the adjacency between $u$ and $y$ (by homogeneity), which is preserved. Finally, $\Phi''$ can be built in time $O(n)$ from $\Phi'$.
\end{proof}

\section{A characterization for fuzzy circular interval graphs}
\label{sec:characterization}

In this section, we give a characterization for {\sc fcig}s. 
We start by giving the definition of an operation of {\em reduction} of a graph $G$ with respect to a homogenous pair of cliques (see Figure \ref{fig:reduction2}):

\begin{definition}
\label{reduction}
Let $(K_1, K_2)$ be a homogenous pair of cliques of $G$. The {\em reduction} of $G$ with respect to $(K_1, K_2)$ returns the graph $G|_{(K_1, K_2)}$ such that:
\begin{description}
\item $V(G|_{(K_1, K_2)}) = V(G)\cup \{x_1, y_1, x_2, y_2\} \setminus (K_1\cup K_2)$;
\item $E(G|_{(K_1, K_2)}) = \{uv: u,v\not\in K_1\cup K_2, uv\in E(G)\} \cup \{ux_1, ux_2: u\not\in K_1\cup K_2, u\in \Gamma(K_1)\} \cup \{uy_1, uy_2: u\not\in K_1\cup K_2, u\in \Gamma(K_2)\}\cup \{x_1x_2, y_1y_2, x_1y_1 \}$.
\end{description}
\end{definition}

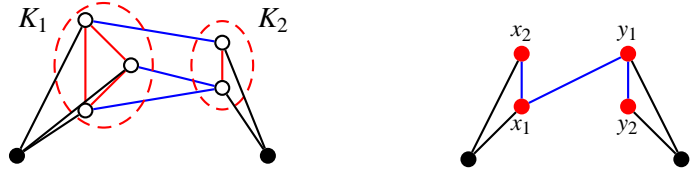
\begin{figure}[tbh]
\begin{center}
\vspace{.5cm}
\begin{tabular}{ccc}
\psset{unit=0.6cm}
\begin{pspicture}(4,0.5)(5,2)
\cnode(1,1){3pt}{1}
\cnode(2,2){3pt}{2}
\cnode(1,3){3pt}{3}
\uput[l](0.5,3){$K_1$}

\ncline[linecolor=red]{1}{2}
\ncline[linecolor=red]{2}{3}
\ncline[linecolor=red]{1}{3}

\cnode(4,1.5){3pt}{4}
\cnode(4,2.5){3pt}{5}
\uput[r](4.5,3){$K_2$}

\ncline[linecolor=red]{4}{5}

\ncline[linecolor=blue]{1}{4}
\ncline[linecolor=blue]{2}{4}
\ncline[linecolor=blue]{3}{5}

\psellipse[linecolor=red, linestyle=dashed](1.4,2)(1.1,1.4)
\psellipse[linecolor=red, linestyle=dashed](4,2)(.7,1)

\cnode*(-.5,0){3pt}{6}
\cnode*(5,0){3pt}{7}

\ncline[linecolor=black]{6}{1}
\ncline[linecolor=black]{6}{2}
\ncline[linecolor=black]{6}{3}
\ncline[linecolor=black]{7}{4}
\ncline[linecolor=black]{7}{5}

\end{pspicture} &&

\hspace*{1cm}

\psset{unit=0.7cm}
\begin{pspicture}(4,0.5)(5,2)
\footnotesize
\cnode*[linecolor=red](6,1){3pt}{1}
\uput[d](6,1){$x_1$}

\cnode*[linecolor=red](6,2){3pt}{1'}
\uput[u](6,2){$x_2$}

\cnode*[linecolor=red](8,1){3pt}{2}
\uput[d](8,1){$y_2$}

\cnode*[linecolor=red](8,2){3pt}{2'}
\uput[u](8,2){$y_1$}

\cnode*(5,0){3pt}{3}
\cnode*(9,0){3pt}{4}

\ncline[linecolor=blue]{1}{1'}
\ncline[linecolor=blue]{2}{2'}
\ncline[linecolor=blue]{1}{2'}

\ncline[linecolor=black]{1}{3}
\ncline[linecolor=black]{1'}{3}

\ncline[linecolor=black]{2}{4}
\ncline[linecolor=black]{2'}{4}

\end{pspicture} \\
\end{tabular}
\end{center}
\caption{A proper and homogeneous pair of cliques $(K_1,K_2)$ (on the left) and the reduction of the graph with respect to the pair $(K_1,K_2)$ (on the right).} 
\label{fig:reduction2}
\end{figure}

In Figure \ref{fig:reduction2}, the reduction of a (proper and) homogeneous pair of cliques is depicted.
We skip the proof of the following simple lemma:

\begin{lemma}
\label{reduce_w}
Let $G$ be a connected graph and $(K_1, K_2)$ a homogeneous pair of cliques. Then $G|_{(K_1, K_2)}$ is connected.
\end{lemma}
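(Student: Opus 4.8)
\textbf{Proof plan for Lemma \ref{reduce_w}.}

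The plan is to show directly that any vertex of $G|_{(K_1,K_2)}$ can be joined to any other by a path, using the connectivity of $G$ together with the structure imposed by the homogeneous pair. First I would fix notation: write $G' = G|_{(K_1,K_2)}$, let $W = V(G)\setminus(K_1\cup K_2)$, so $V(G') = W \cup \{x_1,y_1,x_2,y_2\}$, and recall that $x_1$ and $x_2$ have exactly the neighbours $\Gamma(K_1)\cap W$ in $W$ (plus $x_1x_2, x_1y_1$), while $y_1,y_2$ have exactly $\Gamma(K_2)\cap W$ in $W$ (plus $y_1y_2, x_1y_1$). The key preliminary observation is that the new vertices are all connected to each other in $G'$ via the path $x_2 x_1 y_1 y_2$, so it suffices to prove that every vertex of $W$ has a path in $G'$ to one of $\{x_1,y_1\}$, and that $W$ together with those new vertices is ``well-connected''.

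The main step is a translation-of-paths argument. Take a vertex $w \in W$; since $G$ is connected, there is a path $P$ in $G$ from $w$ to some vertex of $K_1\cup K_2$ — indeed, since $K_1\cup K_2 \neq \emptyset$, pick a shortest path from $w$ to $K_1\cup K_2$, so that all of $P$ except its last vertex lies in $W$. Let $w'$ be the second-to-last vertex of $P$ (with $w' = w$ if $P$ has length one); then $w' \in W$ and $w'$ is adjacent in $G$ to some vertex of $K_1\cup K_2$, hence $w' \in \Gamma(K_1)\cup\Gamma(K_2)$ — here I use that in a homogeneous pair every vertex of $W$ adjacent to $K_i$ is complete to $K_i$. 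Then in $G'$ the vertex $w'$ is adjacent to $x_1$ (if $w'\in\Gamma(K_1)$) or to $y_1$ (if $w'\in\Gamma(K_2)$), and the initial segment of $P$ from $w$ to $w'$ lies entirely in $W$ and is therefore a path of $G'$ as well (edges among $W$-vertices are preserved by the reduction). Concatenating gives a $w$–$x_1$ or $w$–$y_1$ path in $G'$. Since $x_1, y_1$ (and $x_2, y_2$) are mutually connected in $G'$, every vertex of $V(G')$ reaches $x_1$, so $G'$ is connected.

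The one point requiring a little care — and the only place anything could go wrong — is the degenerate case where $W = \emptyset$, i.e. $V(G) = K_1\cup K_2$; then $G' $ is just the graph on $\{x_1,y_1,x_2,y_2\}$ with edges $x_1x_2, y_1y_2, x_1y_1$, which is connected (it is a path $x_2 x_1 y_1 y_2$), so the claim holds trivially. I would also note in passing that the argument does not actually need $(K_1,K_2)$ to be homogeneous for connectivity per se, but homogeneity is what guarantees the clean statement ``adjacent to $K_i$ implies in $\Gamma(K_i)$'', which makes the translated path land on the right new vertex; without it one would only know $w' \in P(K_i)\cup\Gamma(K_i)$, which is still enough since a proper vertex is also adjacent to some vertex of $K_i$ — but invoking homogeneity keeps the bookkeeping trivial. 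Overall I expect no real obstacle here; the statement is ``simple'' exactly because path-lifting through $W$ is immediate and the gadget $\{x_1,x_2,y_1,y_2\}$ is internally connected by construction.
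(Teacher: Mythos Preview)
Your argument is correct, and the paper itself omits the proof entirely (it says ``We skip the proof of the following simple lemma''), so there is no approach to compare against; your path-lifting argument --- take a shortest $G$-path from $w\in W$ to $K_1\cup K_2$, observe its penultimate vertex lies in $\Gamma(K_1)\cup\Gamma(K_2)$ by homogeneity, and splice onto the internally connected gadget $x_2x_1y_1y_2$ --- is exactly the routine justification one would supply.

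One small inaccuracy in your parenthetical aside: without homogeneity it is \emph{not} ``still enough'' that $w'\in P(K_i)\cup\Gamma(K_i)$, because by Definition~\ref{reduction} the new vertices $x_1,x_2$ are adjacent in $G'$ only to vertices of $\Gamma(K_1)$, not to vertices proper to $K_1$; so the reduction as defined really does require the homogeneity hypothesis for your translated path to land on the gadget. This does not affect your main proof, which correctly invokes homogeneity.
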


We are now ready to prove our main result, which  shows that the reduction of proper and homogeneous pairs of cliques preserves the property of a graph to be fuzzy circular interval.

\begin{theorem}
\label{characterization}
Let $G$ be a connected graph and let $(K_1, K_2)$ be a proper and homogeneous pair of cliques. Then $G|_{(K_1, K_2)}$ is connected. Moreover, $G$ is a fuzzy circular interval graph if and only if $G|_{(K_1, K_2)}$ is a fuzzy circular interval graph and, from a representation for $G$, one may build in $O(n^2)$-time a representation for $G|_{(K_1, K_2)}$, and vice versa.
\end{theorem}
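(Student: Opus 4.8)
The plan is to derive everything from Theorem \ref{main} together with one elementary observation: in a {\sc fcig} representation where an entire clique is mapped to a single point of $\mathcal C$, that point behaves like a ``super-vertex'', and the structure of a homogeneous pair around two such points is captured entirely by the single (fuzzy) interval joining them. Connectedness of $G|_{(K_1,K_2)}$ is exactly Lemma \ref{reduce_w}, so the work is to exhibit the two $O(n^2)$-time translations of representations, which together give the equivalence.

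For the direction ``$G$ {\sc fcig} $\Rightarrow$ $G|_{(K_1,K_2)}$ {\sc fcig}'', I would first note that a proper pair is in particular almost-proper, apply Theorem \ref{main} to get in $O(n^2)$ time a representation $(\Phi'',\mathcal I')$ of $G$ with $\Phi''(K_1)=\{a\}$, $\Phi''(K_2)=\{b\}$ and (say) $[a,b]\in\mathcal I'$, and then simply leave the vertices outside $K_1\cup K_2$ where they are, place $x_1,x_2$ at $a$ and $y_1,y_2$ at $b$, and keep $\mathcal I'$. One then checks the adjacencies of $G|_{(K_1,K_2)}$ case by case against Definition \ref{reduction}: edges among $V\setminus(K_1\cup K_2)$ are untouched; $x_1x_2$ and $y_1y_2$ are forced (co-located vertices lying in a common interval which is not the degenerate one with both endpoints at their common image); for $z\notin K_1\cup K_2$, homogeneity of $(K_1,K_2)$ makes the $z$--$a$ adjacency ``all or nothing'', so $z$ can be made adjacent to both of $x_1,x_2$ exactly when $z\in\Gamma(K_1)$ (similarly with $b$ and $y_1,y_2$); and the four $x_i$--$y_j$ adjacencies are all fuzzy, so we set just $x_1y_1\in E$. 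The symmetric direction is the same: $(\{x_1,x_2\},\{y_1,y_2\})$ is an almost-proper and homogeneous pair of $G|_{(K_1,K_2)}$ (neither $x_i$ is complete to $\{y_1,y_2\}$, neither $y_j$ to $\{x_1,x_2\}$, the edge $x_1y_1$ is present, and $x_1,x_2$ --- resp.\ $y_1,y_2$ --- have identical external neighbourhoods), so Theorem \ref{main} collapses it onto two points $a,b$ with $[a,b]\in\mathcal I'$; I then place all of $K_1$ at $a$, all of $K_2$ at $b$, keep $\mathcal I'$, and check as above that the result represents $G$ (using that each $K_i$ is a clique, that $z\in\Gamma(K_1)$ iff $z$ was adjacent to $x_i$, and that the $K_1$--$K_2$ adjacencies are all fuzzy and can be set to reproduce $G[K_1\cup K_2]$).

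The one point that genuinely needs an argument --- the key step --- is the claim that, after each collapse, $[a,b]$ is the \emph{unique} interval of $\mathcal I'$ meeting both $a$ and $b$, equivalently that every cross-adjacency between the two collapsed cliques is left fuzzy. This uses two facts: no point of $\mathcal C$ is the endpoint of two intervals (so $[a,b]$ and $[b,a]$ cannot both occur, and any interval with $a,b$ as its endpoints is one of these), and properness of the pair (so no interval can contain $a$ or $b$ in its interior and also reach the other point, since that would make a vertex of $K_1$ complete to $K_2$, resp.\ $x_1$ complete to $\{y_1,y_2\}$ in the reverse direction, where the non-edge $x_1y_2$ plays the same role). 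Everything else is immediate from Definitions \ref{homo} and \ref{reduction} and from Theorem \ref{main}; the remaining chores are to confirm the fuzzy external adjacencies can always be resolved consistently (clear, since homogeneity forces them to be constant on $K_1$ and on $K_2$) and to record that, since Lemma \ref{building_block} gives $|K_1|,|K_2|\ge 2$, both graphs have at most $n$ vertices and neither construction increases $|\mathcal I'|$, so Assumption \ref{number_of_int} is maintained and the overall running time stays $O(n^2)$.
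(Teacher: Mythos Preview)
Your proposal is correct and follows essentially the same approach as the paper: invoke Lemma~\ref{reduce_w} for connectedness, then in each direction apply Theorem~\ref{main} to collapse the relevant (almost-)proper homogeneous pair onto two points $a,b$ with $[a,b]\in\mathcal I'$, keep $\mathcal I'$, and re-place the four gadget vertices (resp.\ the original cliques $K_1,K_2$) at $a$ and $b$. Your explicit justification that $[a,b]$ is the \emph{only} interval of $\mathcal I'$ containing both $a$ and $b$---hence all cross-adjacencies are genuinely fuzzy---is a detail the paper leaves implicit (it follows because $(\Phi'',\mathcal I')$ validly represents the non-edges across the pair and no point of $\mathcal C$ is the end of two intervals), so your write-up is if anything a bit more careful on that point.
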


\begin{proof}
From Lemma \ref{reduce_w}, $G|_{(K_1, K_2)}$ is connected. We now show that $G$ is a {\sc fcig} if and only if $G|_{(K_1, K_2)}$ is a {\sc fcig}. 

\smallskip
{\em Necessity}. 
From Theorem \ref{main}, we know that, from any representation for $G$, we may build in $O(n^2)$-time another one $(\Phi, \mathcal{I})$ such that $\Phi(K_1) = a$, $\Phi(K_2) = b$ for some $a\neq b \in \mathcal C$ and, without loss of generality, $[a,b] \in \mathcal{I}$. Consider the following mapping $\Phi'$ for the vertices of $G|_{(K_1, K_2)}$ (cfr. Definition \ref{reduction}):

\smallskip $\bullet$  for $v \in V(G|_{(K_1, K_2)}) \setminus \{x_1, y_1, x_2, y_2\}$, $\Phi'(v) = \Phi(v)$;

\smallskip $\bullet$  for $v\in \{x_1, x_2\}$, $\Phi'(v) = a$;

\smallskip $\bullet$  for $v\in \{y_1, y_2\}$, $\Phi'(v) = b$.

\smallskip 
We claim that $(\Phi', \mathcal{I})$ is a representation for $G|_{(K_1, K_2)}$, i.e.~that $(\Phi', \mathcal{I})$ is consistent with $E(G|_{(K_1, K_2)})$. First, consider $u$ and $v\in \{x_1, y_1, x_2, y_2\}$. In this case, consistency holds since $[a,b]$ is an interval of $\mathcal{I}$. Now consider $u$ and $v\notin \{x_1, y_1, x_2, y_2\}$. In this case, $uv\in E(G|_{(K_1, K_2)})$ if and only if $uv\in E(G)$: consistency follows since $\Phi'(v) = \Phi(v)$, $\Phi'(u) = \Phi(u)$ and we keep $\mathcal{I}$. Finally consider $u$ and $v$ such that $u\in \{x_1, y_1, x_2, y_2\}$, e.g.~$u\in\{x_1, x_2\}$, and $v\not\in \{x_1, y_1, x_2, y_2\}$. In this case, $uv\in E(G|_{(K_1, K_2)})$ if and only if $v\in \Gamma(K_1)$: consistency follows since $\Phi'(v) = \Phi(v)$, $\Phi'(u) = \Phi(K_1)$ and we keep $\mathcal{I}$.

\smallskip
{\em Sufficiency}. Note that $G|_{(K_1, K_2)}$ satisfies the hypothesis of Theorem \ref{main} with $(\{x_1, x_2\},\{y_1, y_2\})$ being an almost-proper homogeneous pair of cliques. Therefore, from any representation for $G|_{(K_1, K_2)}$, we may build in $O(n^2)$-time another representation $(\Phi, \mathcal{I})$ such that such that $\Phi(v) = a$ for $v\in \{x_1, x_2\} $, $\Phi(v) = b$ for $v\in \{y_1, y_2\}$ and without loss of generality~$[a,b] \in  \mathcal{I}$. In order to show that $G$ is fuzzy circular interval too, we consider the pair $(\Phi', \mathcal{I})$, where $\Phi'$ is such that: 

\smallskip $\bullet$   for $v \in V(G) \setminus (K_1 \cup K_2)$, $\Phi'(v) = \Phi(v)$;

\smallskip $\bullet$  for $v\in K_1$, $\Phi'(v) = a$;

\smallskip $\bullet$  for $v\in K_2$, $\Phi'(v) = b$.

\smallskip

It is again easy to show that $(\Phi', \mathcal{I})$ is a representation for $G$, we omit the details.
\end{proof}

\section{Recognizing fuzzy circular interval graphs}
\label{sec:recognition} 

In order to provide our recognition algorithm for {\sc fcig}s, we need a result from the literature. It is a natural algorithm for finding a proper and homogenous pair of cliques, that appears in King and Reed \cite{kr} and in Pietropaoli \cite{UgoPhd}. A vertex $v$ of a graph $G(V,E)$ is \emph{universal} to $u \in V$ if $v$ is adjacent to $u$ and to every vertex in $N(u) \setminus \{v\}$.

\begin{algorithm}
\caption{Finding proper and homogeneous pairs of cliques}
\label{PH}
\begin{algorithmic} [1]
\REQUIRE A graph $G$.
\ENSURE A proper and homogeneous pair of cliques $(K', K)$, if any.

\smallskip
\FOR {each pair of adjacent vertices $\{u, v\}$, that are not universal to each other}
\STATE  $K' := \{u, v\}$; $K := P(\{u,v\})$.
\WHILE {$K$ is a clique and $P(K) \neq K'$}
\STATE{$K'\leftarrow K$, $K\leftarrow P(K)$}
\ENDWHILE
\STATE {\bf if} {$K$ is not a clique} {\bf then} {there is no proper and homogeneous pair of cliques $(K_1, K_2)$ such that either $\{u,v\} \subseteq K_1$ or $\{u,v\} \subseteq K_2$.}
\STATE {\bf else} {$P(K) =K'$} and {$\{K,K'\}$ is a proper and homogeneous pair of cliques: {\bf stop}}
\ENDFOR
\end{algorithmic}
\end{algorithm}

We are now ready to state our recognition algorithm (Algorithm \ref{reco}) for {\em connected} graphs (we shall take care of non-connected graphs later).

\begin{algorithm}[h]
\caption{The recognition algorithm}
\label{reco}
\begin{algorithmic} [1]
\REQUIRE A connected graph $G$.
\ENSURE Say whether $G$ is fuzzy circular interval and, in case, find a representation.

\smallskip
\STATE $i = 0$; $G^0 = G$.
\WHILE {$G^i$ has a proper and homogeneous pair of cliques $(X_i, Y_i)$}
\STATE { $G^{i+1}: = G^{i}|_{(X_i, Y_i)}$; $i = i+1$.} \ENDWHILE
\STATE  $q:= i$.
\IF{$G^{q}$ is not a circular interval graph}
\STATE {$G$ is not a fuzzy circular interval graph: {\bf stop}.} 
\ELSE  
\STATE Compute a (fuzzy) interval representation for $G^{q}$.
\FOR {$h=q$ down to 1} 
\STATE {extend the representation for $G^{h}$ into a representation for $G^{h-1}$ using Th. \ref{characterization}.} \ENDFOR
\ENDIF
\end{algorithmic}
\end{algorithm}

\begin{theorem}
\label{complexity}
Algorithm \ref{reco} is correct and terminates in at most $m$ iterations.
\end{theorem}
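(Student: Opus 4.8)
The plan is to verify the two assertions of the theorem separately: correctness of the output, and the bound of $m$ iterations on the main \texttt{while} loop. I would first deal with the iteration count, since it is the cleaner part. The key observation is that the reduction operation strictly decreases the number of edges. Indeed, by Lemma \ref{building_block}, a proper pair of non-empty cliques $(K_1,K_2)$ induces a $C_4$ in $G[K_1\cup K_2]$, so $G[K_1\cup K_2]$ has at least four vertices and at least four edges; in particular $|K_1\cup K_2|\ge 4$ and $|E(G[K_1\cup K_2])|\ge 4$. On the other hand, the reduction replaces $K_1\cup K_2$ by the four new vertices $x_1,x_2,y_1,y_2$ with only the three edges $x_1x_2,y_1y_2,x_1y_1$ among them, and the edges from the new vertices to $V\setminus(K_1\cup K_2)$ are in bijection with (and in bounded multiplicity over) the edges from $\Gamma(K_1)\cup\Gamma(K_2)$ to $K_1\cup K_2$ — here one must check that the homogeneity of the pair, together with the fact that each new vertex has a prescribed neighborhood ($x_1,x_2$ see $\Gamma(K_1)$; $y_1,y_2$ see $\Gamma(K_2)$), makes the new external edge count no larger than the old one whenever $|K_1|+|K_2|\ge 4$, which holds. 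Hence $|E(G^{i+1})|<|E(G^i)|$, so the loop runs at most $m$ times.

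Next, correctness. By Theorem \ref{characterization}, each step $G^{i+1}=G^i|_{(X_i,Y_i)}$ preserves connectedness and preserves the property of being a {\sc fcig}, in both directions; so by a trivial induction, $G=G^0$ is a {\sc fcig} if and only if $G^q$ is a {\sc fcig}. When the loop terminates, $G^q$ has no proper and homogeneous pair of cliques, so by Lemma \ref{fuzzy2}, $G^q$ is a {\sc fcig} if and only if $G^q$ is a circular interval graph. Combining, $G$ is a {\sc fcig} if and only if $G^q$ is a circular interval graph, which is exactly the test performed in the \texttt{if} statement (using one of the poly-time {\sc cig} recognition algorithms, e.g.~\cite{DHH}); thus the ``$G$ is not a fuzzy circular interval graph'' branch is sound, and in the other branch we do obtain a representation. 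For the representation itself: we compute a (fuzzy) interval representation of $G^q$ — which exists since $G^q$ is a {\sc cig}, hence trivially a {\sc fcig} — and then the \texttt{for} loop walks back up, at each step $h$ invoking the ``sufficiency'' direction of Theorem \ref{characterization} to turn a representation of $G^h=G^{h-1}|_{(X_{h-1},Y_{h-1})}$ into a representation of $G^{h-1}$. After $q$ such extensions we have a representation of $G^0=G$, as desired. One should also note that Algorithm \ref{PH} is exactly what powers the loop guard: it either returns a proper and homogeneous pair or certifies none exists, and its correctness is cited from \cite{kr,UgoPhd}.

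The main obstacle, and the step that deserves the most care, is the edge-count argument showing $|E(G^{i+1})|<|E(G^i)|$ — specifically bounding the external edges of the reduced graph. Internally we trade at least four edges for exactly three, which is already a net loss of one provided the external edge count does not increase; the subtlety is that $x_1$ and $x_2$ each get a full copy of the edges to $\Gamma(K_1)$, and likewise $y_1,y_2$ to $\Gamma(K_2)$, so a priori the external count could blow up if $|K_1|=|K_2|=1$. But in that degenerate case $(K_1,K_2)$ being \emph{proper} forces, by Lemma \ref{building_block}, a $C_4$ inside $G[K_1\cup K_2]$, i.e.~$|K_1\cup K_2|\ge 4$, contradicting $|K_1|=|K_2|=1$; more generally proper-ness guarantees $|K_1|,|K_2|\ge 2$, so $K_1$ alone already contributes $\ge 2|\Gamma(K_1)|$ external edges against the $2|\Gamma(K_1)|$ contributed by $x_1,x_2$ (and similarly for $K_2$ versus $y_1,y_2$; the edge $x_1y_1$ is absorbed into the internal count). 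Pinning down this case analysis cleanly — and handling the possible overlap $\Gamma(K_1)\cap\Gamma(K_2)$ and vertices in $K_1\cup K_2$ adjacent to the other clique — is where the real work of the proof lies; everything else is bookkeeping on top of Theorems \ref{main}, \ref{characterization}, Lemmas \ref{fuzzy2}, \ref{building_block}, and the cited correctness of Algorithm \ref{PH}.
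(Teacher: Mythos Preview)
Your proposal is correct and follows essentially the same route as the paper: termination via an edge-count drop (using Lemma~\ref{building_block} to get at least four internal edges versus three in the gadget), and correctness by chaining Theorem~\ref{characterization} with Lemma~\ref{fuzzy2}. The one substantive difference is that you actually argue the external-edge count does not increase (using $|K_1|,|K_2|\ge 2$ to match the $2|\Gamma(K_i)|$ new edges against at least $|K_i|\cdot|\Gamma(K_i)|$ old ones), whereas the paper states the $4$-versus-$3$ observation and leaves the external comparison implicit; your extra care here is warranted, and the concerns you flag about overlaps in $\Gamma(K_1)\cap\Gamma(K_2)$ turn out to be non-issues once the edges are counted side-by-side as you do.
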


\begin{proof}
The algorithm defines a sequence of graphs $G^0$, \ldots, $G^q$, for some $q\leq m$: in fact, each proper and homogeneous pair of cliques contains $C_4$ (see Lemma \ref{building_block}), that has 4 edges, as an induced subgraph, while the gadget we use in our reduction operation has 3 edges. It also claims that $G = G^0$ is a {\sc fcig} if and only if  $G^q$ is a circular interval graph. That is correct. In fact, on the one hand, Theorem \ref{characterization} ensures that each graph in the sequence is a connected graph, that is a {\sc fcig} if and only if $G$ is so. On the other hand, since $G^q$ is a graph without proper and homogeneous pairs of cliques, from Lemma \ref{fuzzy2} it is a {\sc fcig} if and only if it is a circular interval graph.

Moreover, if $G$ is a {\sc fcig}, then the algorithm returns a representation for it. In fact, in this case, $G^q$ is a circular interval graph and the algorithm computes a (fuzzy) interval representation for it; this representation can be then extended onto representations for $G^{q-1}, \ldots, G^0$ following Theorem \ref{characterization}.
\end{proof}

{\bf Complexity issues.} We now analyze the complexity of Algorithm \ref{reco}. As it is shown in \cite{kr}, it is possible to implement Algorithm \ref{PH} as to run in $O(n^2m)$-time; therefore, for each $i = 0, \ldots, q-1$, a proper and homogeneous pair of cliques of $G^i$ can be found in O$(n^2m)$-time.  As we discussed in Section \ref{fcig} we can recognize if a graph with $n_1$ vertices and $m_1$ edges is circular interval, and in case build an interval representation, in $O(n_1+m_1)$-time, with a trivial modification of the algorithm in \cite{DHH}; therefore we can recognize in $O(n+m)$-time whether $G^{q}$ is a circular interval graph (in fact, as we already discussed $m_1 < m$, moreover, $n_1\leq n$). Each graph $G^i$ can be built from the graph $G^{i-1}$ in linear time and each representation for $G^{i-1}$ can be extended into a representation for $G^i$ in time $O(n^2)$ (because of Theorem \ref{characterization} and since $|V(G^i)|\leq n$). Since the number of iterations is bounded by $m$, it easily follows that Algorithm \ref{reco} can be indeed implemented as to run in $O(n^2m^2)$-time.

Even better, building upon some arguments from \cite{FOS}, it is possible to show that this complexity can be lowered to $O(n^2m)$. The crucial fact is the following. Say that two adjacent vertices $u$ and $v$ of $G$ form a {\em candidate} pair if they are not universal to each other; thus, the candidate pairs of $G$ are at most the number of its edges. Note that Algorithm \ref{PH} receives a candidate pair in input. In \cite{FOS} it is shown that, when passing from $G$ to $G|_{(K_1, K_2)}$, at least one candidate pair is destroyed, and no new candidate pair is created. It follows that, throughout all the iterations of Algorithm \ref{reco}, each candidate pair is scanned at  most once. With some care, this observation leads to an implementation of Algorithm \ref{reco} running in $O(n^2m)$-time. More details can be found in \cite{FOS}.

We summarize the previous discussion into the following: 

\begin{lemma}
\label{complexitybis}
It is possible to implement Algorithm \ref{reco} as to run in $O(n^2m)$-time.
\end{lemma}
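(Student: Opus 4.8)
The plan is to combine the per-iteration costs already established in the paper, using the observation from \cite{FOS} about candidate pairs to avoid the naive bound of $m$ independent calls to Algorithm \ref{PH}. First I would recall that the whole computation splits into three phases: the reduction loop (lines 2--4), the base recognition of $G^q$ (lines 6--8), and the extension loop (lines 10--11). The base phase is cheap: by the discussion in Section \ref{fcig}, recognizing whether $G^q$ is circular interval and, if so, producing an interval representation with at most $n$ intervals, costs $O(n_1+m_1) = O(n+m)$ via the algorithm of \cite{DHH}, since $|V(G^q)|\le n$ and, as noted, $|E(G^q)| < m$. The extension phase costs $O(n^2)$ per iteration by Theorem \ref{characterization} (each $G^h$ has at most $n$ vertices), and there are $q\le m$ iterations by Theorem \ref{complexity}, for a total of $O(n^2m)$. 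So the entire burden of proving the lemma rests on showing the reduction loop also runs in $O(n^2m)$.

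Next I would analyze the reduction loop. Naively, each iteration calls Algorithm \ref{PH}, which runs in $O(n^2m)$-time by \cite{kr}, and there are up to $m$ iterations, giving $O(n^2m^2)$; this is the bound recorded in the ``Complexity issues'' paragraph. To improve it I would invoke the key structural fact from \cite{FOS}: when we pass from $G^i$ to $G^{i+1} = G^i|_{(X_i,Y_i)}$, at least one \emph{candidate} pair (an adjacent pair of vertices that are not universal to each other) is destroyed and no new candidate pair is created. Since Algorithm \ref{PH} iterates its outer \textbf{for} over candidate pairs, and each such pair, once scanned, is either certified to lie in no proper and homogeneous pair of cliques (in which case it need never be revisited, as reductions elsewhere in the graph do not create new homogeneous pairs through it) or triggers a reduction that destroys at least one candidate pair, the total number of candidate-pair scans performed \emph{across all iterations} of the reduction loop is $O(m)$ rather than $O(m)$ per iteration. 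Charging the inner \textbf{while} work of Algorithm \ref{PH} to each scanned candidate pair — each nested invocation costs $O(n^2)$ amortized for computing the $P(\cdot)$ sequences and checking the clique condition — yields $O(n^2m)$ for the whole reduction loop; building each $G^{i+1}$ from $G^i$ is only linear and is dominated by this. Adding the three phases gives the claimed $O(n^2m)$.

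The main obstacle is making precise the bookkeeping behind ``each candidate pair is scanned at most once.'' One must argue that a candidate pair of $G^i$ which Algorithm \ref{PH} certifies negative — i.e., it lies in no proper and homogeneous pair of cliques of $G^i$ — remains negative in all later $G^j$, $j>i$; this is exactly where the ``no new candidate pair is created, and adjacencies outside the reduced pair are untouched'' property of the reduction (Definition \ref{reduction}) is used, and it is the content attributed to \cite{FOS}. One must also handle the gadget vertices $x_1,x_2,y_1,y_2$ introduced by each reduction: they form new adjacent pairs, but $\{x_1,x_2\}$ and $\{y_1,y_2\}$ are universal to each other by construction and $x_1y_1$ is the only remaining new edge, so only a bounded number of genuinely new candidate pairs appear per reduction, and there are $O(m)$ reductions, keeping the total at $O(m)$. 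Since the excerpt explicitly defers these details to \cite{FOS}, in the write-up I would state the candidate-pair invariant as a cited fact and simply carry out the amortization, so the proof reduces to the arithmetic $O(n^2m) + O(n+m) + m\cdot O(n^2) = O(n^2m)$.

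\begin{proof}
We bound the three phases of Algorithm \ref{reco} separately.

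\smallskip
\emph{The reduction loop (lines 2--4).} By Theorem \ref{complexity} the loop performs $q\le m$ iterations, each consisting of one call to Algorithm \ref{PH} on $G^i$ followed by the construction of $G^{i+1}=G^i|_{(X_i,Y_i)}$, the latter taking only $O(|V(G^i)|+|E(G^i)|)=O(n+m)$-time. A direct bound using the $O(n^2m)$-time implementation of Algorithm \ref{PH} from \cite{kr} would give $O(n^2m^2)$ for this loop; we improve it by amortization. Call two adjacent vertices of a graph a \emph{candidate} pair if they are not universal to each other, so the candidate pairs of any $G^i$ number at most $|E(G^i)|\le m$. The outer \textbf{for} of Algorithm \ref{PH} ranges over candidate pairs, and for a given candidate pair the nested \textbf{while} together with the clique tests can be carried out in $O(n^2)$ amortized time, by maintaining the sets $P(\cdot)$ incrementally. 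By the analysis of \cite{FOS}, passing from $G^i$ to $G^{i+1}$ destroys at least one candidate pair and creates no new candidate pair except those among the gadget vertices $\{x_1,x_2,y_1,y_2\}$; of the edges among these, $x_1x_2$ and $y_1y_2$ join vertices universal to each other, so at most one new candidate pair ($x_1y_1$) is created per reduction, hence at most $q\le m$ new candidate pairs in total. Moreover, once Algorithm \ref{PH} certifies that a candidate pair lies in no proper and homogeneous pair of cliques of the current graph, the reduction performed at a subsequent iteration — which leaves adjacencies outside the reduced pair unchanged and introduces no new candidate pair through that pair — keeps this certificate valid, so such a pair need never be rescanned \cite{FOS}. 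Consequently the total number of candidate-pair scans over all iterations of the reduction loop is $O(m)$, and the loop runs in $O(m)\cdot O(n^2)=O(n^2m)$-time, the $O(n+m)$ cost of building each $G^{i+1}$ being dominated.

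\smallskip
\emph{Recognizing $G^q$ (lines 6--8).} The graph $G^q$ has $n_1:=|V(G^q)|\le n$ vertices and $m_1:=|E(G^q)|<m$ edges. As discussed in Section \ref{fcig}, a trivial modification of the algorithm in \cite{DHH} recognizes in $O(n_1+m_1)=O(n+m)$-time whether $G^q$ is a circular interval graph and, if so, outputs an interval representation using at most $n$ intervals.

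\smallskip
\emph{The extension loop (lines 10--11).} There are $q\le m$ iterations, and in each of them, by Theorem \ref{characterization}, a representation for $G^h$ is turned into a representation for $G^{h-1}$ in $O(n^2)$-time, since $|V(G^h)|\le n$. Hence this loop runs in $O(n^2m)$-time.

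\smallskip
Adding the three bounds gives $O(n^2m)+O(n+m)+O(n^2m)=O(n^2m)$, as claimed.
\end{proof}
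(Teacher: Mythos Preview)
Your proposal is correct and follows essentially the same approach as the paper: both split the cost into the reduction loop, the base recognition of $G^q$, and the extension loop, and both rely on the candidate-pair invariant from \cite{FOS} to amortize the calls to Algorithm~\ref{PH} over the whole run, yielding $O(m)$ scans at $O(n^2)$ each. One minor discrepancy: the paper asserts (citing \cite{FOS}) that \emph{no} new candidate pair is created by a reduction, whereas you allow up to one new candidate pair per reduction among the gadget vertices; your weaker claim still gives $O(m)$ total scans, so the bound is unaffected, but your side remark that ``$x_1x_2$ and $y_1y_2$ join vertices universal to each other'' is not quite right ($x_2$ is not universal to $x_1$ since $x_2y_1\notin E$), and you do not address possible new pairs between a gadget vertex and an old vertex --- these points are exactly what the paper defers to \cite{FOS}, so simply citing the invariant as stated there is the cleaner route.
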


\smallskip
We close the paper by discussing what to do when $G$ is not connected. In this case, we have the following simple lemma, whose proof we omit.

\begin{lemma}
\label{connection}
Let $G=(V, \Phi, \mathcal{I})$ be a non-connected graph. $G$ is a fuzzy circular interval graph if and only if each connected component is a fuzzy linear interval graph. 
\end{lemma}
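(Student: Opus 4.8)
The plan is to prove both directions by exhibiting (or extracting) the relevant representations. For the ``if'' direction, suppose $G$ has connected components $G_1,\dots,G_k$, and each $G_j$ is a fuzzy \emph{linear} interval graph with a representation $(\Phi_j,\setI_j)$ on a line $L_j$. I would place these $k$ linear representations consecutively around the circle $\setC$: cut $\setC$ into $k$ pairwise internally-disjoint closed arcs $A_1,\dots,A_k$ (so that $\setC=\bigcup_j A_j$ and consecutive arcs share only an endpoint), and for each $j$ embed the line $L_j$ homeomorphically into the interior of $A_j$, transporting $\Phi_j$ to a map into $A_j^\circ$ and each interval of $\setI_j$ to an interval contained in $A_j^\circ$. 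Let $\Phi$ be the union of these maps and $\setI=\bigcup_j\setI_j$. The key point is that no interval of $\setI$ meets two different arcs $A_j$, so two vertices in distinct components never lie in a common interval; hence conditions (a) and (b) of Definition \ref{def:fuzzy} for $G$ reduce exactly to the corresponding conditions for the individual $G_j$, which hold by hypothesis, and there is no interference between components (in particular, the non-adjacency of vertices in different components is respected, since (a) is only a necessary condition for adjacency). The bookkeeping about ``no point is the end of more than one interval'' and ``no interval includes another'' is inherited from the $\setI_j$ plus the disjointness of the arcs' interiors.

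For the ``only if'' direction, suppose $G$ is a fuzzy circular interval graph with representation $(\Phi,\setI)$, and let $G_1,\dots,G_k$ ($k\ge 2$, since $G$ is disconnected) be its connected components. The plan is to show that there is a point $p\in\setC$ that is not the image of any vertex and is not in the interior of any interval of $\setI$; cutting $\setC$ at $p$ turns the circular representation into a linear one, which restricts to a fuzzy linear interval representation of each $G_j$. To find such a $p$: first, $\setI$ does not cover $\setC$. Indeed, if the intervals of $\setI$ covered $\setC$, pick one vertex from each of two distinct components, say $u\in G_1$ and $v\in G_2$; walking along $\setC$ from $\Phi(u)$ to $\Phi(v)$, the covering intervals would chain together, and using condition (b) one shows that some two vertices along the way that lie together in an interval (which is not the ``forbidden'' fuzzy interval for that pair) are adjacent, eventually forcing a path — or at least an edge — between $G_1$ and $G_2$, contradicting that they are different components. (This is essentially the mechanism already used in Lemma \ref{interval}, which I would invoke or mimic.) Once we know $\bigcup_{I\in\setI}I\ne\setC$, there is an open arc disjoint from every interval of $\setI$; any vertex mapped into that arc would be isolated, but an isolated vertex is trivially a fuzzy linear interval graph and can be set aside, so we may assume the arc contains no vertex image either, and we take $p$ in its interior. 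Cutting at $p$ gives a fuzzy linear interval representation of $G$, hence of each $G_j$.

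The main obstacle I expect is the ``only if'' direction — specifically, the argument that $\setI$ cannot cover the whole circle when $G$ is disconnected. This requires carefully tracing how adjacency is forced by condition (b) along a chain of overlapping intervals, while being careful about the fuzzy exception (the one interval per pair of vertices whose endpoints are those two vertices' images). The cleanest route is probably to reuse Lemma \ref{interval}: apply it to a clique — but since a single vertex is not a clique of size two or more, one instead argues directly that a covering family of intervals together with condition (b) makes the union of any component ``reach around'' and touch another, contradicting disconnectedness. The ``if'' direction and the final cut-at-$p$ step are routine once the covering claim is in hand.
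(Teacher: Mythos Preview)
The paper explicitly omits the proof of this lemma, so there is no paper argument to compare against; I can only assess correctness of your proposal.

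The ``if'' direction is fine. The ``only if'' direction, however, has a genuine gap: your key claim that $\setI$ cannot cover $\setC$ when $G$ is disconnected is false as stated. Take $G=K_2\sqcup K_2$ with vertices $u_1,u_2,v_1,v_2$; put $\Phi(u_1)=0$, $\Phi(u_2)=\epsilon$, $\Phi(v_1)=\pi$, $\Phi(v_2)=\pi+\epsilon$ and let
\[
\setI=\bigl\{\,[-\delta,\epsilon+\delta],\ [\epsilon,\pi],\ [\pi-\delta,\pi+\epsilon+\delta],\ [\pi+\epsilon,0]\,\bigr\}
\]
for small $0<\delta<\epsilon$. These four intervals cover $\setC$, no point is the end of two of them, and none includes another. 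The only intervals containing both a $u_i$ and a $v_j$ are $[\epsilon,\pi]$ (for $u_2,v_1$) and $[\pi+\epsilon,0]$ (for $v_2,u_1$), and in each case the interval has precisely those two images as its endpoints, so the adjacency is fuzzy and may be set to ``non-edge''. Thus $(\Phi,\setI)$ is a valid {\sc fcig} representation of a disconnected graph with $\bigcup\setI=\setC$. Your chaining argument breaks exactly here: walking from $\Phi(u_1)$ to $\Phi(v_1)$ you reach the hop $u_2\to v_1$, which lands in a fuzzy interval with no forced edge and no intermediate vertex to route through.

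The fix is to first invoke Lemma~\ref{fuzzy1}, passing to a representation in which every fuzzy pair is \emph{proper}; in particular every fuzzy interval then carries at least one edge between its two endpoint cliques. With that normalization in hand, one checks that no interval of $\setI$ can contain images of vertices from two different components (condition~(b) forces an edge unless the interval is fuzzy with those images as endpoints, and properness then still yields an edge between the endpoint cliques, which lie in the two components). Hence for each component $G_j$ the point $\Phi(v)$ of any vertex $v\notin G_j$ lies outside every interval meeting $\Phi(G_j)$, and cutting there gives the desired fuzzy linear representation of $G_j$. So your overall strategy survives, but only after this preprocessing step that you omitted.
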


Therefore, the problem of recognizing non-connected {\sc fcig}s reduces to the problem of recognizing (connected) fuzzy linear interval graphs. Also the latter problem can be solved by our reduction techniques; we have in fact the following:

\begin{theorem}
\label{characterization_bis}
Let $G$ be a connected graph and let $(K_1, K_2)$ be a proper and homogeneous pair of cliques. $G$ is a fuzzy linear interval graph if and only if $G|_{(K_1, K_2)}$ is a fuzzy linear interval graph  and, from a representation for $G$, one may build in $O(n^2)$-time a representation for $G|_{(K_1, K_2)}$, and vice versa.
\end{theorem}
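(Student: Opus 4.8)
The statement is the ``linear analogue'' of Theorem~\ref{characterization}, so the natural plan is to mirror that proof almost verbatim, replacing the circle $\mathcal{C}$ by a line and ``circular interval'' by ``linear interval'' everywhere, and to check that each ingredient used in the circular case has a linear counterpart. First I would observe that all the machinery developed in Sections~\ref{sec:prop_hom}--\ref{sec:characterization} was phrased (via the parenthetical remarks ``Substituting \emph{line} for \emph{circle}\ldots'' and the ``(resp.~linear)'' clauses in Lemmas~\ref{fuzzy2} and in the discussion) so as to carry over: in particular Lemma~\ref{fuzzy2} already gives the linear version, and the reduction operation of Definition~\ref{reduction} is purely combinatorial and does not refer to any representation. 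So the only thing that genuinely needs re-proving is the linear analogue of Theorem~\ref{main}: given a connected fuzzy linear interval graph $G=(V,\Phi,\mathcal{I})$ and an almost-proper homogeneous pair $(K_1,K_2)$, one can build in $O(n^2)$-time a representation $(\Phi'',\mathcal{I}')$ on a line with $\Phi''(K_1)=a$, $\Phi''(K_2)=b$, and $[a,b]\in\mathcal{I}'$ (or $[b,a]$).

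\textbf{Key steps.} I would proceed as follows. (1) State and prove the linear version of Lemma~\ref{fuzzy} and Theorem~\ref{main}: the proofs of Lemmas~\ref{fuzzy_simple}, \ref{interval}, \ref{fuzzy} and of Theorem~\ref{main} use the circle only through the local structure of intervals near the fuzzy pair (covering of the circle by three intervals, the ``smallest interval covering $K_i$'' construction, Claims~\ref{recap}--\ref{claim:4}); on a line the ``three intervals cover the circle'' alternative in Lemma~\ref{interval} simply cannot occur, which only makes the arguments easier, and every other step goes through unchanged since it is local. The one point to check is that the ``fuzzy dominating pair'' shortcut (Lemma~\ref{fuzzy_simple}) still produces a valid \emph{linear} representation: there $K_1',K_2'$ are placed at two points $a\neq b$ of the line with $\mathcal{I}''=\{[a,b]\}$, which is a legitimate linear interval representation. (2) With the linear Theorem~\ref{main} in hand, repeat the proof of Theorem~\ref{characterization} word for word. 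For \emph{necessity}, take a linear representation of $G$, apply linear-Theorem~\ref{main} to collapse $K_1$ onto $a$ and $K_2$ onto $b$ with $[a,b]\in\mathcal{I}$, define $\Phi'$ on $V(G|_{(K_1,K_2)})$ by $\Phi'(x_1)=\Phi'(x_2)=a$, $\Phi'(y_1)=\Phi'(y_2)=b$, $\Phi'$ unchanged elsewhere, and keep $\mathcal{I}$; the three adjacency cases ($u,v\in\{x_1,x_2,y_1,y_2\}$; both outside; one in, one out) are checked exactly as in Theorem~\ref{characterization}, noting that the interval $[a,b]$ being present takes care of all the gadget adjacencies $x_1x_2,y_1y_2,x_1y_1$ and of the $\Gamma(K_i)$-adjacencies. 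For \emph{sufficiency}, observe $(\{x_1,x_2\},\{y_1,y_2\})$ is an almost-proper homogeneous pair of $G|_{(K_1,K_2)}$ (the edge $x_1y_1$ witnesses ``almost-proper''), apply linear-Theorem~\ref{main} to put $\{x_1,x_2\}$ at $a$, $\{y_1,y_2\}$ at $b$ with $[a,b]\in\mathcal{I}$, then expand $K_1$ back onto $a$ and $K_2$ onto $b$; the adjacency check is again routine and identical to Theorem~\ref{characterization}. Connectivity of $G|_{(K_1,K_2)}$ is Lemma~\ref{reduce_w}. All transformations are $O(n^2)$, as in the circular case.

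\textbf{Main obstacle.} The only real work is verifying that the proof of Lemma~\ref{fuzzy}/Theorem~\ref{main} survives the passage from circle to line; I expect this to be genuinely routine because on a line the ``three intervals covering $\mathcal{C}$'' branch of Lemma~\ref{interval} is vacuous, so every place where the circular proof had to rule out or handle that case becomes strictly shorter, and no new degenerate configuration appears (a line has two ``ends'', but the smallest intervals $I_1',I_2'$ covering $K_1,K_2$ together with the connecting interval $J^*$ all live in a bounded portion, so the ends play no role). Hence I would simply remark that the proofs of Lemmas~\ref{fuzzy_simple}--\ref{fuzzy} and Theorems~\ref{main},~\ref{characterization} apply \emph{mutatis mutandis} with ``circle'' replaced by ``line'', and that this yields the claim; spelling out the adjacency bookkeeping adds nothing new. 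If a fully self-contained argument were demanded, the tedious-but-straightforward part would be re-running Claims~\ref{recap} through~\ref{claim:4} in the linear setting, but there is no conceptual difficulty there.
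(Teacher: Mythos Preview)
Your proposal is correct and takes essentially the same approach as the paper: the paper itself omits the proof, stating only that it ``goes along the same lines as the proof for Theorem~\ref{characterization}'', which is exactly the route you outline. Your additional remark that the linear case is in fact easier (because the ``three intervals cover $\mathcal{C}$'' branch of Lemma~\ref{interval} is vacuous on a line) is a correct and useful observation that the paper leaves implicit.
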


The proof of  Theorem \ref{characterization_bis} goes along the same lines as the proof for Theorem \ref{characterization} so we skip it. Finally, Theorem \ref{characterization_bis} and Lemma \ref{fuzzy2} reduce the recognition of non-connected {\sc fcig}s to that of linear interval graphs. The latter problem can be easily solved \cite{Cetal, HMM, PD}.

\end{document}